\title{MeshRef_ETC_Journal}
\pgfplotsset{compat=1.12}
\newcommand{\norm}[1]{\left|#1\right|}
\newcommand{\vecnorm}[1]{\left\lVert#1\right\rVert}
\DeclarePairedDelimiter\ceil{\lceil}{\rceil}
\newtheorem{theorem}{Theorem}
\newtheorem{corollary}{Corollary}
\newtheorem{lemma}{Lemma}
\newtheorem{assumption}{Assumption}
\newtheorem{definition}{Definition}
\newtheorem{problem}{Problem}
\begin{document}

\title{Inaccuracy matters: accounting for solution accuracy in event-triggered nonlinear model predictive control}


\author{Omar J. Faqir, Eric C. Kerrigan
\thanks{This work has been submitted to the IEEE for possible publication. Copyright may be transferred without notice, after which this version may no longer be accessible.}
\thanks{The support of the EPSRC Centre for Doctoral Training in High Performance Embedded and Distributed Systems  (HiPEDS, Grant Reference EP/L016796/1) is gratefully acknowledged.
}
\thanks{O. J. Faqir is with the Department of Electrical \& Electronics Engineering, Imperial College London, SW7~2AZ, U.K. {\tt\small ojf12@ic.ac.uk}}%
\thanks{Eric C. Kerrigan is with the Department of Electrical \& Electronic Engineering
and Department of Aeronautics, Imperial College London, London SW7~2AZ, U.K. {\tt\small e.kerrigan@imperial.ac.uk}}%
}


\maketitle

\begin{abstract}
We consider the effect of using approximate system predictions in
event-triggered control schemes. Such approximations may result from
using numerical transcription methods for solving continuous-time
optimal control problems. Mesh refinement can guarantee upper bounds on the error in the differential equations which model the system dynamics. With the accuracy guarantees of a mesh refinement scheme, we show that the proposed event-triggering scheme --- which compares the measured system with approximate state predictions --- can be used with a guaranteed strictly positive inter-update time. We show that if we have knowledge of the employed transcription scheme or the approximation errors, then we can obtain better online estimates of inter-update times. We additionally detail a method of tightening constraints on the approximate system trajectory used in the nonlinear programming problem to guarantee constraint satisfaction of the continuous-time system. This is the first work to incorporate prediction accuracy in triggering metrics. Using the solution accuracy we can guarantee reliable lower bounds for inter-update times and perform solution dependent constraint tightening.
\end{abstract}

\begin{IEEEkeywords}
Nonlinear model predictive control, direct methods, collocation, event-triggered control
\end{IEEEkeywords}

\IEEEpeerreviewmaketitle

\section{Introduction}

Periodic control policies are often accepted without question. 
A system measurement is taken at each sample period and used to calculate an appropriate control signal to apply to the plant. 
Although the choice of periodic control policies is in part due to the discrete nature of digital platforms, processor clock speed is typically several orders of magnitude faster than the dominant plant dynamics. 
More importantly, assumptions on sampling periodicity simplifies analysis \cite{aastrom1999comparison}.
Even so, the choice of sample period is non-obvious. It is often determined through engineering know-how (e.g. an order of magnitude faster than dominant dynamics) or based on the Nyquist-Shannon theorem. 
However, there is no explicit relationship between the Nyquist frequency and discretization accuracy, so sampling decisions are often conservative \cite{lazutkin2018approach}.

Periodic sampling policies are inherently open-loop.
Control practitioners determine a \emph{single} sample interval to satisfy performance and stability criteria for \emph{all} possible initial conditions and disturbance realizations \cite{gommans2014self}.
This `worst-case' periodic design leads to over allocation/utilization of communication and computation resources with no necessary improvement in control performance \cite{aastrom1999comparison}.
Furthermore, the systems of interest have also evolved --- we often don't have compute units dedicated for control. 
Embedded devices typically have a set of (real-time) tasks, of which control is only one. 
The problem of resource allocation is exacerbated for (wireless) networked control systems ((W)NCS) where sensors, actuators and controllers may be spatially separated \cite{henriksson2012self,mccann2004improved} and communication resources scarce.

The control loop is closed by measuring the plant, computing a control action, and updating the actuators.
Consider the following sets of times for each of these actions. 
(i) The evenly sampled grid $\mathcal{T}_g := \left\{t_k^g\right\}_{k\in\mathbb{N}}$, where $t_{k+1}^g-t_k^g=h, \forall k \in \mathbb{N}$, for some $h \geq 0$. By setting $h=0$ we can consider the archetype where the system is continuously monitored.
(ii) The set of times $\mathcal{T}_s := \left\{t_k^s\right\}_{k\in\mathbb{N}}$ where the system is sampled, and (iii) $\mathcal{T}_u = \left\{t_k^u\right\}_{k\in\mathbb{N}}$ where the control action is updated.
Without considering delays, periodic control system design implicitly assumes that $\mathcal{T}_g = \mathcal{T}_s = \mathcal{T}_u$, where $h>0$ is chosen sufficiently small to guarantee stability.
We feel that requiring the sets of update and sample times to be equivalent is overly restrictive. 
Intuitively, it is only required that $\mathcal{T}_u \subseteq \mathcal{T}_s \subseteq \mathcal{T}_g$.
The left inclusion follows since any feedback is parametric in the most recent state measurement.
If $\mathcal{T}_u \not \subseteq \mathcal{T}_s$ then the control is updated using outdated system information.
To characterize data losses and delays in NCS, \cite{peng2016designing} further distinguishes between times when the system measurements are taken, sent and received. 

The work \cite{gommans2014self} makes the case that any metric of control performance should reflect the offline and online cost of implementing the controller. 
The online cost is of particular importance in computationally demanding control methods, such as model predictive control (MPC).
In the exemplary work \cite{velasco2003self} a task model is proposed for joint determination of control and sampling of real-time systems. 
The principle is to increase/decrease control resources as a function of system state and the number of other processes requesting compute resources. 
The use of state feedback to dynamically determine sample times has since been central in the fields of event- and self-triggered control. 
These strategies, which are formulated in Section~\ref{sec:preliminaries}, only update the control when state-dependent conditions are met. 

(Nonlinear) model predictive control ((N)MPC) is an extensively used control technique, where a finite time control trajectory is designed to optimize some performance metric of the predicted system evolution. 
NMPC complements event- and self-triggering strategies, because the predictive information resulting from online optimization may allow for longer open-loop run times than using zero-order hold state feedback controllers \cite{lucia2016predictive}.
The computational burden of solving a nonlinear program (NLP) online is a key drawback. 
However, aperiodic triggering schemes have the potential to increase the average compute time available between control updates.

An additional consideration is that numerical methods for solving the continuous-time optimal control problems that arise in NMPC only generate approximations of the continuous state and input trajectories. 
The computational burden of NMPC may potentially be reduced by either solving a less complex NLP --- resulting in \emph{lower quality} predictions --- or by solving the NLP less frequently --- resulting in a more \emph{outdated} prediction.
The quality of the NMPC solution is different from closed-loop performance, and may by guaranteed through mesh refinement schemes. This paper is the first work to consider the effect of these approximation errors in triggering schemes.

We investigate how the solution accuracy and triggering conditions will affect the inter-update times in an event-triggered NMPC framework.
We bound the inter-update time (IUT) 
\begin{equation}
\min_{k\in\mathbb{N}} t_{k+1}^u - t_k^u    
\end{equation} 
using certifications on the solution accuracy, guaranteed through a mesh refinement scheme, and the triggering protocol. 
We further show how error information provided by online mesh refinement schemes can be used to generate less conservative IUT estimates. 
A method for constraint tightening is proposed to guarantee that constraints placed on the approximate system predictions at a finite number of time instances ensure the real system trajectory satisfies constraints.

In this work we will first introduce a set of preliminaries (Section~\ref{sec:preliminaries}) detailing basic event- and self- triggered control paradigms, the general nonlinear model predictive control optimization problem and a possible procedure for solving this problem, using direct collocation and mesh refinement. In Section~\ref{sec:MinIUT_Offline} we use the guarantees from event-triggering and mesh refinement schemes to determine a guaranteed minimum time between control updates, which depends on the allowable prediction error and solution accuracy. This bound is improved in Section~\ref{sec:OnlineTrig} where we use the actual error in the solution of the optimization problem, rendering less conservative online estimates of the inter control update times. In Sections~\ref{sec:LinearExample} and~\ref{sec:NonlinearExample} these methods are exemplified for linear and nonlinear systems. This work significantly extends results from \cite{faqir2020Mesh} by generalizing the system under consideration, the triggering conditions being employed and proposing constraint tightening schemes.

\section{Preliminaries} \label{sec:preliminaries}
Consider the nonlinear dynamical system described by
\begin{align} \label{eq:sysDyn}
\dot{x}(t) &= f(t,x(t),u(t)) + v(t),\\
y(t) &= x(t) + \theta(t)
\end{align}
where $x(t)\in \mathbb{R}^n$ denotes the state of the system, $y(t) \in \mathbb{R}^n$ the measured/estimated state, $u(t)\in \mathbb{R}^m$ the control input, and $v(t), \theta(t) \in \mathbb{R}^n$ are bounded exogenous inputs that satisfy $\vecnorm{v(t)} \leq \hat{v}, \vecnorm{\theta(t)} \leq \hat{\theta}$. The nominal system dynamics are described by $f(\cdot)$ and are Lipschitz in $x$ with constant~$L_x$, i.e.\
\begin{equation}
	\vecnorm{f(t,x_1,u) - f(t,x_2,u)} \leq L_x \vecnorm{x_1 - x_2}
\end{equation}
for all $x_1$, $x_2$, $t$, and $u$. Furthermore, $\vecnorm{f(\cdot,x,u)} \leq L_f$ for all $x,u$. 

\subsection{Event- and Self-Triggered Control}

In event-triggered control (ETC) (e.g. \cite{heemels2012introduction}), the system is continuously or periodically sampled, but the control is only updated when certain `triggering' conditions are met. Consider, for example, the following problem adapted from~\cite{heemels2012introduction}:

\begin{problem}[\textbf{ETC}] \label{prob:ETC}
	Consider the dynamical system described by \eqref{eq:sysDyn}, and a time-varying state feedback law $u(t)=\mu(x(t),t)$ which renders the closed-loop system globally asymptotically stable (GAS). Identify a set of state- and input- dependent conditions $F_\text{event} : \mathbb{R}^n \rightarrow \mathbb{R}$, resulting in update times
	\begin{equation} \label{eq:ETC_time}
	t_{i+1}^u = \inf \{t \in \mathbb{R}^+_0 | t > t_i^u, F_\text{event}(x(t)) \geq 0 \},
	\end{equation}
	for which the closed-loop system with sampled-data implementation is GAS (and satisfies appropriately defined performance criteria).
\end{problem}

Triggering based on predicted system evolution was used in \cite{heemels2013model} for stabilizing a linear networked control system (NCS), but assumed synchronization of predictions between controller and plant/sensor.
This was extended in \cite{wu2016redesigned} by introducing an artificial waiting mechanism, which guarantees stability in the presence of constant network delays. In NCS, predictive information allows for active compensation of delay and packet dropouts. The works \cite{yin2015model},\cite{yin2016model} use these compensation methods to combat imperfections in future transmitted data, giving stability conditions with and without plant-model mismatch.

Since $F_\text{event}(\cdot)$ depends on the current system state, ETC implementations often require dedicated sensor-side hardware to continuously monitor the state of the system.
Choice of an appropriate $F_\text{event}(\cdot)$ is still open and depends on the control scheme under consideration. We require the IUTs
    \begin{equation}
        \tau_i^u := t_{i+1}^u - t_i^u, \ \forall i \in \mathbb{R}^+
    \end{equation}
to be strictly positive. To avoid Zeno phenomena a lower bound $\underline{\tau}^u \leq \tau_i^u,\ \forall i \in \mathbb{R}^+$ must exist \cite{borgers2014event}.
The IUTs must also be upper bounded. MPC schemes often restrict $\tau_i^u$ to be less than the prediction horizon.
In addition to any stability/performance guarantees placed by the control practitioner, in the case of MPC we must guarantee constraint satisfaction and recursive feasibility of the optimal control problem (OCP).

In Problem~\ref{prob:ETC} the triggering condition is designed subject to an existing control law. This is sometimes referred to as emulation based design. However, in event-triggered MPC the problem formulation is often adjusted to guarantee stability and/or robustness  of the open-loop system for extended periods of time. As such, the problem and triggering formulations can often not be designed separately.
Some event-based controllers trigger directly on a measure of the difference between predicted and measured state. We explicitly denote triggering dependency on the most recent state predictions $\tilde{x}(\cdot)$ by defining and employing the triggering condition
\begin{equation}
\tilde{F}_\text{event}\left(x(t),\tilde{x}(t,t_k^u)\right).
\end{equation}

In self-triggered control (STC) (e.g. \cite{velasco2003self,heemels2012introduction}), the control is calculated at time $t_k^u$, from which the ystem evolution is predicted and an explicit IUT $\tau_k^u$ is calculated. Consider the following, adapted from \cite{anta2010sample}:

\begin{problem}[\textbf{STC}] \label{prob:STC}
	Consider the dynamical system described by \eqref{eq:sysDyn}, and time-varying state feedback law $u(t)=\mu(x(t),t)$ which renders the closed-loop system GAS. Identify a set of state and input dependent conditions $F_\text{self}:\mathbb{R}^n \times \mathbb{R}^+ \rightarrow \mathbb{R}$, resulting in inter-sample periods
	\begin{equation} \label{eq:STC_Prob_Condition}
	\tau_i^u = \inf \{\tau \in \mathbb{R}^+_0 | F_\text{self}(x(t_i^u),\tau) \geq 0 \},
	\end{equation}
	for which the closed-loop system is GAS (and satisfies appropriately-defined performance criteria).
\end{problem}

Both $F_\text{event}(\cdot)$ or $F_\text{self}(\cdot)$ are commonly chosen to guarantee some sufficient decrease in an appropriately defined Lyapunov function. 
Although STC lacks the inherent robustness of ETC, it does not require dedicated hardware for continuous sensing. 
In addition, $t_{i+1}^u$ is known $\tau_i^u$ seconds in advance in STC schemes.
This is important for process scheduling on embedded platforms or dynamic communication resource allocation. 
Both STC and ETC conditions are based on emulation of an \emph{a priori} known control law $\mu(\cdot)$. 
However, in the case of MPC, this emulation is implicit in the solution of the optimization problem \eqref{eq:Bolza_cont} defined below.
We will assume the prediction was always computed at the most recent update time, and to simplify notation will write $\delta(\cdot,t_k^u)$ as $\delta(\cdot)$ where possible.

\subsection{Nonlinear Model Predictive Control}
In MPC a finite time sequence/trajectory of control inputs is designed to optimize some performance metric of the predicted plant evolution \cite{mayne2000constrained}.
The optimization problem is parametric in the most recent state measurements. 
Typical MPC paradigms account for uncertainty by measuring/estimating the state and solving the associated optimization problem at periodic intervals.
Only the first portion of the computed actuation is applied to the system before the  procedure is repeated.
Application of MPC has historically been restricted to slow or simple (linear) dynamical systems due to the complexity of performing optimization close to real-time. However, greater computational resources and tailor-made optimization algorithms nowadays allow for MPC  of faster, more complex systems than in the past.

The continuous-time NMPC problem solved at time $t_i^u$ may be cast in the general Bolza form,
\begin{subequations}
	\makeatletter
	\def\@currentlabel{P}
	\makeatother
	\label{eq:Bolza_cont}
	\begin{align}
	\min_{\hat{x},\hat{u}} \ \Phi&(\hat{x}(t_0),\hat{x}(t_f)) + \int\limits_{t_0}^{t_f}L(\hat{x}(t),\hat{u}(t),t) \ \mathrm{d}t\tag{Pa} \label{eq:sc_cost} \\
	\text{s.t. } & \forall t \in [t_0,t_f], \notag\\
	& \dot{\hat{x}}(t) = f(\hat{x}(t),\hat{u}(t),t)  \text{ a.e.} \tag{Pb} \label{eq:dynODE} \\
	& c(\hat{x}(t),\hat{u}(t),t) \leq 0  \text{ a.e.} \tag{Pc} \label{eq:pathConstriants}\\
	& \phi(\hat{x}(t_0),\hat{x}(t_f)) = 0  \tag{Pd} \label{eq:termConstriants}\\
	&  \hat{x}(t_0)=x(t_i^u). \tag{Pe} \label{eq:IC_Meas}
	\end{align}
\end{subequations}
Optimization is performed over a horizon $T_\text{op}:=t_f-t_0$ over the internal variables~$\hat{x}, \hat{u}$, representing the state and input trajectory.
The cost \eqref{eq:sc_cost} is composed of the stage (Lagrange) cost functional $L(\cdot)$ and boundary (Mayer) cost functional~$\Phi(\cdot)$, which is evaluated at initial and final times $t_0,t_f$.
The resulting state trajectory, $x \in \mathcal{C}^0$ satisfies the nominal dynamics~$f(\cdot)$, typically enforced as ordinary differential equations~\eqref{eq:dynODE}. States and controls must satisfy the path constraints~\eqref{eq:pathConstriants} for all time.
Boundary conditions \eqref{eq:termConstriants} and $\Phi(\cdot)$ are common tools for guaranteeing closed-loop stability \cite{mayne2000constrained}.
The feedback occurs by including state measurements \eqref{eq:IC_Meas} as and initial condition.
Key advantages of MPC are its use of a predictive model and guarantees of constraint satisfaction.
We refer to $x^*(\cdot),u^*(\cdot)$ as the \emph{true} solution of the optimal control problem \eqref{eq:Bolza_cont} as
\begin{equation}
(x^*(\cdot),u^*(\cdot)) := \text{arg}\min_{(\hat{x},\hat{u})
} \eqref{eq:Bolza_cont}.
\end{equation}

\subsection{Direct Collocation \& Mesh Refinement}
The dynamic optimization problem \eqref{eq:Bolza_cont} is formulated in continuous-time, and is infinite dimensional.
Except for the simplest cases, \eqref{eq:Bolza_cont} is intractable or impossible to solve analytically. 
Direct collocation methods transcribe \eqref{eq:Bolza_cont} into a finite dimensional NLP, which is then solved using numerical optimization methods. The discrete NLP solution is interpolated to reconstruct an \emph{approximate} continuous-time solution $\tilde{x}(\cdot),u(\cdot)$ to \eqref{eq:Bolza_cont} \cite{betts2010practical}. 
System state trajectories may be approximated as continuous piece-wise polynomials  using a combination of $h$- and $p$-methods (known as $hp$-methods) \cite{kelly2017introduction}, briefly detailed below. In these methods the input is approximated as discontinuous piece-wise polynomials.
Define the interval $\Omega:=(t_0,t_f)$, and $\overline{\Omega}$ as the closure of $\Omega$.
\begin{definition} [Mesh]\label{def:Mesh}
	The set $\mathcal{T}_h$ is called a \textbf{mesh} and consists of open intervals $T\subset\Omega$ satisfying conditions
	\begin{enumerate}
		\item Disjunction $T_1 \cap T_2 = \emptyset \ \forall \text{ distinct } T_1,T_2 \in \mathcal{T}_h$,
		\item Coverage $\cup_{T\in\mathcal{T}_h} \overline{T} = \overline{\Omega}$,
		\item Resolution $\max_ {T\in\mathcal{T}_h}\norm{T} = h$,
		\item Quasi-uniformity $\min_{T_1,T_2\in\mathcal{T}_h} \frac{\norm{T_1}}{\norm{T_2}} \geq \sigma > 0$,
	\end{enumerate}
	where constant $\sigma$ must not depend on mesh parameter $h$.
\end{definition}


We denote the number of mesh segments by $K:=\operatorname{card}{\mathcal{T}_h}$ and refer to the $k^\text{th}$ segment as $T_k,\forall k \in \mathcal{K}_h := \{1,\hdots, K\}$.
We define the indexed set of mesh points as
\begin{equation}
\mathcal{T}_m := \cup_{k\in\mathcal{K}_h} \inf \overline{T}_k,
\end{equation}
where two polynomial segments are joined, and index them identically to the mesh segments.
The approximate state trajectory $\tilde{x}(\cdot)$ defined on this mesh will then satisfy
\begin{equation}
\tilde{x} \in \mathcal{X}_p:=\{x:\overline{\Omega}\rightarrow \mathbb{R}^n | x \in \mathcal{C}^0(\overline{\Omega}), x \in \mathcal{P}_p(T)^n , \forall T \in \mathcal{T}_h\},
\end{equation}
where $\mathcal{P}_p(T)$ is the space of functions that are polynomials of maximum degree $p\in\mathbb{N}_0$ on interval $T$. 
In collocation methods the dynamic constraints \eqref{eq:dynODE} are enforced at a finite number of collocation points in each $T_k$, the location and number of which depend on the chosen method. Denote collocation points in $T_k$ as $t^{c,k}_j \in \mathcal{T}_{k,c}$, and
\begin{equation}
\dot{\tilde{x}}(t^{c,k}) = f(\tilde{x}(t^{c,k}),u(t^{c,k}),t^{c,k}), \forall t^{c,k} \in \mathcal{T}_{k,c}, k \in \mathcal{K}.
\end{equation}

In $h$-methods (e.g.\ Euler, Hermite-Simpson), a fixed-degree polynomial is used on each segment \cite{betts2010practical}. For fixed~$p$, the approximation accuracy may be improved by reducing~$h$ and increasing the number of segments.
In $p$-methods the unknown trajectories are approximated as an interpolation of orthogonal basis functions \cite{fahroo2008advances}.
In global $p$-methods, a single interpolated function is used over the entire interval~$\Omega$.
Approximation accuracy is adjusted by changing the polynomial order.

Mesh refinement schemes iteratively adjust the number of segments and/or polynomial order of the transcription. At each iteration of the refinement, problem \eqref{eq:Bolza_cont} must be re-transcribed and the resulting NLP solved. Such schemes guarantee a user-defined level of accuracy of the approximation, which is important in applications.
We must therefore consider three different trajectories. The \emph{true} solution of \eqref{eq:Bolza_cont}, $x^*(\cdot),u^*(\cdot)$, which is in general unknown. The \emph{approximate} solution $\tilde{x}(\cdot),u(\cdot)$ which depends on the chosen transcription method and is computed numerically.
Finally, the \emph{actual} state trajectory $x(\cdot)$ of the plant resulting from evaluating \eqref{eq:sysDyn} with the approximate input $u(\cdot)$. In general $\tilde{x}(t)\not = x^*(t) \not = x(t),\forall t \not \in \mathcal{T}_u$. A pictorial representation of this is shown in Fig.~\ref{fig:TrajectoryApproxEx}.

\begin{figure}[tb]
	\centering
	\includegraphics[width=\columnwidth]{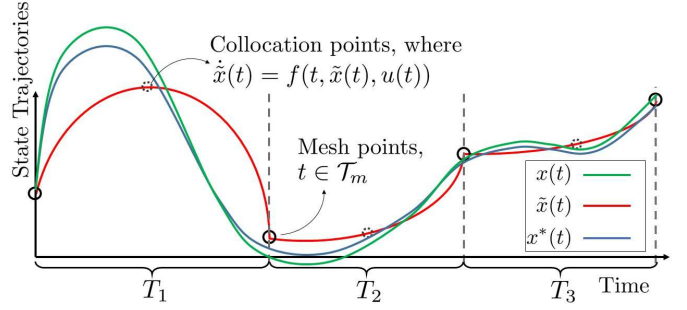}
	\caption[TODO]{A  comparison of $x(\cdot)$, $x^*(\cdot)$ and the approximate solution $\tilde{x}(\cdot) \in \mathcal{X}_{P}$ defined on mesh $\mathcal{T}_h$.}
	\label{fig:TrajectoryApproxEx}
\end{figure}

Since $x^*(\cdot),u^*(\cdot)$ are not known, they cannot be used to evaluate the approximation accuracy.
One possible metric for refinement is the \emph{absolute local error} at time $t$ in the $i^\text{th}$ state,
\begin{equation}\label{eq:AbsoluteLocalError}
\varepsilon_i(t) := \dot{\tilde{x}}_i(t) - f_i(t,\tilde{x}(t),u(t)).
\end{equation}
The vector $\varepsilon(t) := [\varepsilon_1(t),\hdots,\varepsilon_n(t)]'$  is the error in the solution of the differential equations \eqref{eq:dynODE} resulting from the chosen transcription method. $\varepsilon(t)$ is necessarily zero at collocation points $t \in \mathcal{T}_{k,c},\forall k \in \mathcal{K}_h$. The quadrature of the approximation error for state $i$ over mesh interval $T_k\in\mathcal{T}_m$ is defined as
\begin{equation}\label{eq:AbsoluteQuadrature}
    \eta_{k,i} :=\int_{T_k} \norm{\varepsilon_i(\tau)}\mathrm{d}\tau, 
\end{equation}
where $\norm{\cdot}$ is the scalar norm.
As in \cite[Ch. 4]{betts2010practical}, we may obtain a \emph{relative} measure of the quadrature of the local error,
\begin{equation} \label{eq:RLE}
    \epsilon_{k,i} := \frac{\eta_{k,i}}{(w_i + 1)} ,
\end{equation}
where the scaling weight,
\begin{equation} \label{eq:RelativeLocalErrorScaling}
    w_i := \max_{t\in\mathcal{T}_m}\left\{\norm{\dot{\tilde{x}}_i(t)}, \norm{\tilde{x}_i(t)}\right\},
\end{equation}
defines the maximum value of the $i^\text{th}$ state or its derivative at each of the mesh points.
Mesh refinement schemes guarantee an upper bound on the quadrature of the approximation error in each segment. Since these schemes are heuristic in nature, there is choice in the metric used.
It is suggested in \cite{betts2010practical} to use the \emph{maximum relative local error} over all $i$ state components in interval $T_k\in\mathcal{T}_m$
\begin{equation} \label{eq:RLEmax}
    \epsilon_{k} := \max_{i}\epsilon_{i,k}.
\end{equation}
However, many practical problems require different states to be solved to different accuracies. As such, we consider using mesh refinement schemes to enforce the conditions
\begin{equation}\label{eq:RelLocErrBound}
    \epsilon_{k,i} \leq \hat{\epsilon}_i, \forall k \in \mathcal{K},
\end{equation}
where $\hat{\epsilon}:= [\hat{\epsilon}_1,\hdots,\hat{\epsilon}_n]^\intercal$ is a vector of user-defined inaccuracy tolerances.
Another alternative is to bound the \emph{absolute local errors},
\begin{equation} \label{eq:AbsLocErrBound}
    \eta_{k,i} \leq \hat{\eta}_i, \forall k \in \mathcal{K},
\end{equation}
where  $\hat{\eta}:= [\hat{\eta}_1,\hdots,\hat{\eta}_n]^\intercal$ is again a vector of user-defined tolerances. Refinement schemes based on absolute local error may perform worse (in terms of number of iterations and the resulting NLP size). However, the tolerances $\hat{\eta}$ will often have a physical meaning, making them easier for control practitioners to determine. Although we assume a time-invariant tolerance for each state over the prediction horizon, the proposed framework and analysis is easily extended to allow for different tolerances at each mesh interval. This is used, for example, in \cite{paiva2017sampled} where the requirements on solution accuracy are reduced further in the prediction horizon.

\section{Minimum IUT} \label{sec:MinIUT_Offline}

Define the error $\delta(t,t_k^u)$ between the predicted state $\tilde{x}(t,t_k^u)$, based on state measurement $x(t_k^u)$, and measurement $x(t)$ as
\begin{equation}
    \delta(t,t_k^u) := \tilde{x}(t,t_k^u)-x(t).
\end{equation}
The components of the \emph{prediction error} in each state are $\delta_i(t) := \tilde{x}_i(t) - y_i(t)$, with $\delta(t):=[\delta_1(t),\hdots,\delta_n(t)]^\intercal$ where $\delta(t) = 0, \forall t \in \mathcal{T}_u$, with the associated prediction error dynamics $\dot{\delta}(t) := \dot{\tilde{x}}(t) - (\dot{x}(t) + \dot{\theta}(t))$. Consider employing the triggering condition
\begin{equation} \label{eq:EventTriggeringCondition}
    \tilde{F}_\text{event} := \Delta - \vecnorm{\delta(t)}_M.
\end{equation}
In the following analysis $\vecnorm{\cdot}_M$ is assumed to be an appropriately defined $p$--norm with weighting matrix $M$, defined as
\begin{equation}
\vecnorm{x}_M := \left(\sum_{i=1}^{n} M_i\norm{x_i}^p\right)^{\frac{1}{p}},
\end{equation}
where the matrix $M\in\mathbb{R}^{n\times n}$ is a positive diagonal weighting matrix. $M_i$ is the $i^\text{th}$ component of the diagonal of $M$. In the following, we also use the definition of an induced matrix norm
\begin{equation} \label{eq:matrixNormDefinition}
\vecnorm{M} := \max_{\vecnorm{x}=1} \vecnorm{M x}.
\end{equation}

Note that the results of this work may be easily extended to triggering conditions $\tilde{F}_\text{event}$ of the form
\begin{equation} \label{eq:trigGeneral}
\Delta - g(\vecnorm{\delta(t)}_M),
\end{equation}
for any monotonically strictly increasing $g:\mathbb{R}\rightarrow\mathbb{R}$. 

\begin{theorem}[Minimum IUT] \label{th:MinIUTRel}
	For the ETC scheme resulting from solving \eqref{eq:Bolza_cont} to an accuracy satisfying \eqref{eq:RelLocErrBound} at times $\mathcal{T}_u$ implicitly determined through triggering condition \eqref{eq:EventTriggeringCondition}, the minimum IUT
	\begin{equation}\label{eq:min_IUT}
	 \underline{\tau}_u := \frac{\Delta - \vecnorm{(\hat{w} + 1)\hat{\epsilon}}_M + \vecnorm{M}2\hat{\theta}}{\frac{\vecnorm{(\hat{w} + 1)\hat{\epsilon}}_M}{\sigma h} + \vecnorm{M}\left(L_x \Delta + \hat{v}\right)},
	\end{equation}
	where $w_i \leq \hat{w}_i$ and $\hat{w}:= [\hat{w}_1,\hdots,\hat{w}_n]$, is guaranteed to be strictly positive for $\Delta > \vecnorm{(\hat{w}+1)\hat{\epsilon}}_M + \vecnorm{M}2\hat{\theta}$.
\end{theorem}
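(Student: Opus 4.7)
The plan is to derive, over the interval $[t_k^u,t]$ with elapsed time $\tau:=t-t_k^u$, a worst-case affine-in-$\tau$ upper bound on $\vecnorm{\delta(t)}_M$, and then solve for the smallest $\tau$ at which this upper bound can reach the triggering threshold $\Delta$. Because the trigger cannot fire while $\vecnorm{\delta(t)}_M<\Delta$, that value is a valid lower bound on $\tau_k^u$.

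Starting from $\delta(t_k^u)=0$, the definition of the absolute local error in~\eqref{eq:AbsoluteLocalError} (which gives $\dot{\tilde{x}}(s)=f(s,\tilde{x}(s),u(s))+\varepsilon(s)$), the plant dynamics~\eqref{eq:sysDyn}, and $\delta=\tilde{x}-x-\theta$, I would write
\begin{equation}
\delta(t)=\int_{t_k^u}^{t}\bigl[f(s,\tilde{x},u)-f(s,x,u)+\varepsilon(s)-v(s)\bigr]\,\mathrm{d}s+\theta(t_k^u)-\theta(t).
\end{equation}
Taking $\vecnorm{\cdot}_M$ and invoking the Lipschitz condition on $f$, the bounds on $v$ and $\theta$, and the assumption $\vecnorm{\delta(s)}_M\leq\Delta$ that holds for all $s\in[t_k^u,t]$ prior to the trigger, the Lipschitz term contributes at most $\vecnorm{M}L_x\Delta\,\tau$, the disturbance $\vecnorm{M}\hat{v}\,\tau$, and the noise-difference $\vecnorm{M}\cdot 2\hat{\theta}$.

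The main obstacle is bounding the transcription-error contribution $\vecnorm{\int_{t_k^u}^{t}\varepsilon(s)\,\mathrm{d}s}_M$, because the mesh refinement scheme controls only the per-segment quadratures $\eta_{k,i}$ rather than $\varepsilon$ pointwise. I would first convert~\eqref{eq:RelLocErrBound} into the absolute per-segment bound $\eta_{k,i}\leq(\hat{w}_i+1)\hat{\epsilon}_i$ via~\eqref{eq:RLE} and $w_i\leq\hat{w}_i$. I would then partition $[t_k^u,t]$ according to its intersections with the mesh intervals and invoke the quasi-uniformity clause of Definition~\ref{def:Mesh}, which forces every segment to have length at least $\sigma h$. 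An interval of length $\tau$ can therefore meet at most $1+\tau/(\sigma h)$ segments, so applying the per-segment bound componentwise and then taking the $M$-norm yields
\begin{equation}
\vecnorm{\int_{t_k^u}^{t}\varepsilon(s)\,\mathrm{d}s}_M\leq\left(1+\frac{\tau}{\sigma h}\right)\vecnorm{(\hat{w}+1)\hat{\epsilon}}_M.
\end{equation}

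Collecting the four contributions produces an affine bound $\vecnorm{\delta(t)}_M\leq A+B\tau$ whose constant $A$ and slope $B$ match exactly the numerator-constant and denominator of~\eqref{eq:min_IUT}. The earliest admissible trigger time is then the unique $\underline{\tau}_u$ solving $A+B\underline{\tau}_u=\Delta$, which gives~\eqref{eq:min_IUT}. Finally, strict positivity of $\underline{\tau}_u$ is equivalent to $A<\Delta$, which is exactly the hypothesis $\Delta>\vecnorm{(\hat{w}+1)\hat{\epsilon}}_M+\vecnorm{M}\cdot 2\hat{\theta}$, closing the argument.
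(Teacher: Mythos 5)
Your proposal is correct and follows essentially the same route as the paper's proof: the same decomposition of $\delta$ into transcription-error, Lipschitz-mismatch, disturbance and measurement-noise contributions, the same conversion of the relative per-segment bound into an absolute one via $w_i\leq\hat{w}_i$, the same segment count $1+\tau/(\sigma h)$ from resolution plus quasi-uniformity, and the same affine-in-$\tau$ bound set against $\Delta$. (Note that your derivation yields the numerator $\Delta-\vecnorm{(\hat{w}+1)\hat{\epsilon}}_M-\vecnorm{M}2\hat{\theta}$, which is consistent with the stated positivity condition; the $+$ sign in \eqref{eq:min_IUT} appears to be a typographical slip in the paper rather than a discrepancy in your argument.)
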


\begin{proof}
	The ETC \eqref{eq:EventTriggeringCondition} guarantees that
	\begin{equation}
	\vecnorm{\delta(t)}_M \leq \Delta.
	\end{equation}
	Without loss of generality, assume problem \eqref{eq:Bolza_cont} was last solved at time $t_i^u=0$, parameterized by the measurement $\tilde{x}(0)=y(0)$. Then the prediction error in the $i^\text{th}$ state is
	\begin{align}
	\delta_i(t) &= \int_0^t \dot{\tilde{x}}_i(\tau) - x_i(\tau) \mathrm{d}t \\
	&= \int_0^t \dot{\tilde{x}}_i(\tau) - \left(f_i(\tau,x(\tau),u(\tau)) + v_i(\tau) + \dot{\theta}_i(\tau)\right)\mathrm{d}\tau \\
	&= \underbrace{\int_0^t \dot{\tilde{x}}_i(\tau) - f_i(\tau,x(\tau),u(\tau)) + v_i(\tau)\mathrm{d}\tau}_{:=\gamma_i(t)} + \theta_i(t)-\theta_i(0) 
	\end{align}
	Having isolated the measurement noise, we focus on the integral denoted by $\gamma_i(t)$. By noting that scalar $w_i$ is strictly positive and not a function of time, we may rewrite \eqref{eq:RLE} as
	\begin{equation} \label{eq:th1_1}
	\epsilon_{k,i} = \int_{t_k}^{t_{k+1}}\underbrace{\norm{\frac{\dot{\tilde{x}}_i(\tau) - f_i(\tau,\tilde{x}(\tau),u(\tau))}{w_i+1}}}_{:=\norm{\alpha_i(\tau)}}\mathrm{d}\tau.
	\end{equation}
	Subsequent substitution of $\dot{\tilde{x}}(t)= (w_i+1)\alpha(t) + f_i(x(t))$ into \eqref{eq:th1_1} results in
	\begin{equation}
	\gamma_i(t) = \int_0^t (w_i+1)\alpha_i(\tau) + f_i(\tau,\tilde{x}(\tau),u(\tau))-f_i(\tau,x(\tau),u(\tau)) - v_i(\tau)\mathrm{d}\tau
	\end{equation}
	Through application of the triangle inequality, the prediction error magnitude of each state may be bounded from above as
	\begin{align}
	\norm{\gamma_i(t)} &\leq \norm{(w_i+1)\int_0^t\alpha_i(\tau)\mathrm{d}\tau} \notag + \\ & \qquad \norm{\int_0^t f_i(\tau,\tilde{x}(\tau),u(\tau))-f_i(\tau,x(\tau),u(\tau))\mathrm{d}\tau} + \\ & \qquad \quad \norm{\int_0^t -v_i(\tau)\mathrm{d}\tau} \\
	&\leq (w_i+1)\int_0^t\norm{\alpha_i(\tau)}\mathrm{d}\tau + \\ & \qquad \int_0^t \norm{f_i(\tau,\tilde{x}(\tau),u(\tau))-f_i(\tau,x(\tau),u(\tau))} \mathrm{d}\tau + \\ & \qquad \quad \int_0^t \norm{v_i(\tau)}\mathrm{d}\tau \label{eq:ThreeIntegrals_1}.
	\end{align}
	Now, consider the set $\mathcal{T}_{t} \subseteq \mathcal{T}_h$ of open intervals of $T$, i.e.
	\begin{equation}
	\mathcal{T}_{t} := \{T \in \mathcal{T}_h | T \cap \overline{(0,t)} \not = \emptyset\}
	\end{equation}
	which is a mesh on an interval contained in $\Omega$ and by construction satisfies the properties detailed in Definition~\ref{def:Mesh}. Since the mesh refinement scheme guarantees $\epsilon_{k,i} \leq \hat{\epsilon}_i, \forall i \in \{1,\hdots,n\}$, the first term in \eqref{eq:ThreeIntegrals_1} may be bounded from above as
	\begin{equation}
	(w_i + 1)\int_0^t\norm{\alpha_i(\tau)}\mathrm{d}\tau \leq (w_i + 1)\sum_{T \in \mathcal{T}_{t}} \int_T \norm{\alpha_i(\tau)}\mathrm{d}\tau \leq (w_i+1) \sum_{T \in \mathcal{T}_{t}} \hat{\epsilon}.
	\end{equation}
	Through application of the resolution and quasi-uniformity properties of mesh $\mathcal{T}_t$, the smallest interval satisfies
	\begin{equation}
	\min_{T\in\mathcal{T}_t} \norm{T} \geq \sigma h := \underline{h}.
	\end{equation}
	Using $\ceil{\cdot}$ to denote the ceiling function, we may bound the maximum number of mesh segments in the interval $(0,t)$ as $\ceil{t\underline{h}^{-1}}$, from which it follows that
	\begin{equation}
	\sum_{T \in \mathcal{T}_{t}} \hat{\epsilon}_i \leq \ceil*{\frac{t}{\underline{h}}} \hat{\epsilon}_i \leq \left(\frac{t}{\underline{h}}+1\right) \hat{\epsilon}_i.
	\end{equation}
	From this, we can bound the prediction error in the $i^\text{th}$ state as
	\begin{align}
	\norm{\delta_i(t)} &\leq 
	\left(\frac{t}{\underline{h}}+1\right)(w_i + 1)\hat{\epsilon}_i + \norm{\theta_i(t)-\theta_i(0)} + \\& \
	\int_0^t\norm{f_i(\tau,\tilde{x}(\tau),u(\tau)) - f_i(\tau,x(\tau),u(\tau))}\mathrm{d}\tau + \int_0^t \norm{v_i(\tau)}\mathrm{d}\tau\\
	&\leq \left(\frac{t}{\underline{h}}+1\right) (w_i + 1)\hat{\epsilon}_i + \norm{\theta_i(t)-\theta_i(0)} + \\& \
	L_x\int_0^t\norm{\delta_i(\tau)}\mathrm{d}\tau + \int_0^t \norm{v_i(\tau)}\mathrm{d}\tau
	\end{align}
	where the second inequality results from the assumption of Lipschitz continuity of $f(\cdot)$.
	Combining each element-wise inequality, we can bound the weighted norm of the prediction error as
	\begin{align}
	\vecnorm{\delta(t)}_M &\leq \Bigg( \sum_{i=1}^n M_i \Bigg(
	\left(\frac{t}{\underline{h}}+1\right) (\hat{w} + 1)\hat{\epsilon}_i + \norm{\theta_i(t)-\theta_i(0)} + \\ & \qquad
	L_x\int_0^t\norm{\delta_i(\tau)}\mathrm{d}\tau + \int_0^t \norm{v_i(\tau)}\mathrm{d}\tau
	\Bigg)^p\Bigg)^{\frac{1}{p}}. 
	\end{align}
	Through application of the triangle inequality we may then bound the prediction error at time $t$ as, 
	\begin{align}
	\vecnorm{\delta(t)}_M  &\leq 
	\left(\frac{t}{\underline{h}}+1\right) \vecnorm{(\hat{w} + 1)\hat{\epsilon}}_M + \vecnorm{\theta(t)-\theta(0))}_M + \\ &
	\qquad
	L_x\int_0^t \vecnorm{\delta(\tau)}_M \mathrm{d}\tau + \int_0^t \vecnorm{v(\tau)}_M\mathrm{d}\tau \label{eq:PreErrorNorm1}
	 \\
	&\leq \left(\frac{t}{\underline{h}} + 1 \right) \vecnorm{(\hat{w} + 1)\hat{\epsilon}}_M + \vecnorm{M}\left(  2\hat{\theta} + t \left(L_x\Delta + \hat{v}\right)\right).
	\end{align}
	The second inequality follows from
	\begin{equation}
	\vecnorm{x}_M  \leq \vecnorm{M}\vecnorm{x}.
	\end{equation}
	We have also made use of the assumptions that $\vecnorm{\theta(t)-\theta(0)} \leq 2\hat{\theta}$ and $\vecnorm{v(t)}\leq \overline{v}$. 
	We now have an expression which is dependent only on parameters known offline. We may guarantee the existence of a $\underline{\tau}^u > 0$ by showing the existence of a $t>0$ such that   
	\begin{equation}
	\left(\frac{t}{\underline{h}} + 1 \right) \vecnorm{(\hat{w} + 1)\hat{\epsilon}}_M + \vecnorm{M}\left(  2\hat{\theta} + t \left(L_x\Delta + \hat{v}\right)\right) \leq \Delta.
	\end{equation}
	Through manipulation, this inequality is satisfied when,
	\begin{equation}
	t \leq \frac{\Delta - \vecnorm{(\hat{w} + 1)\hat{\epsilon}}_M + \vecnorm{M}2\hat{\theta}}{\frac{\vecnorm{(\hat{w} + 1)\hat{\epsilon}}_M}{\sigma h} + \vecnorm{M}\left(L_x \Delta + \hat{v}\right)}.
	\end{equation}
	Since all parameters are in $\mathbb{R}_0^+$, the RHS is guaranteed to be strictly positive as long as $\Delta > \vecnorm{(\hat{w}+1)\hat{\epsilon}}_M + \vecnorm{M}2\hat{\theta}$.
\end{proof}

In the special case, detailed in \cite[Ch. 4]{betts2010practical}, where refinement occurs based only on the maximum relative error,
guaranteeing
\begin{equation}
\max_i \epsilon_{k,i} \leq \epsilon_\text{max}, \forall k \in \mathcal{K} \Rightarrow  \epsilon_{k,i} \leq \hat{\epsilon}, \forall i \in \{1,\hdots,n\},  k \in \mathcal{K},
\end{equation}
then the resulting guaranteed IUT may be written as
\begin{equation}
	\underline{\tau}^u := \frac{\Delta - \vecnorm{\hat{w}+1}_M\epsilon_\text{max} - \vecnorm{M}2\hat{\theta}}{\frac{\vecnorm{\hat{w}+1}_M\epsilon_\text{max}}{\sigma h} + \vecnorm{M}\left(L_x \Delta + \hat{v}\right)}.
\end{equation}

In a basic mesh refinement scheme, the NLP resulting from \eqref{eq:Bolza_cont} is solved on each new, increasingly dense mesh. In \cite{betts2010practical} it is suggested to  use an initially sparse mesh, where the refinement adds grid points where necessary. Therefore, a small $\hat{\eta}$ invariably results in more NLPs being solved for a single approximation $\tilde{x}(\cdot),u(\cdot)$. 
Since \eqref{eq:min_IUT} is a decreasing function of $\hat{\epsilon}$, we can conclude a trade off between solving more NLPs per control update --- yielding a higher accuracy solution --- but less frequently, so solving fewer NLPs at each update, but more frequently.

We refine the mesh of a direct collocation transcription using local error analysis. The driving principle of direct collocation is that local changes of the mesh or polynomial order will only affect local changes in the solution. Hence, although the above trade-off \emph{is} true for any refinement scheme, we can also affect this trade-off by tailoring the operation of the refinement procedure used. The work  \cite{betts2010practical} proposes using only the current error distribution of the primal variables as a means of estimating how the error may be reduced by adding additional mesh points in specific intervals. On the other hand, \cite{paiva2017sampled} refines using a metric based on the error distribution of the dual variables --- this maintains important properties as a representation of solution quality while being computationally cheaper. 
Regardless, adding too many mesh points increases the dimensionality of the following NLP unnecessarily. By contrast, adding too few mesh points results in more refinement iterations (and NLP solves) needed to achieve allowable local errors.

In this qualitative analysis we have also not considered the effect of warm starting the NLP solver (both at each mesh refinement iteration and each update time) with the previous solution. 
We have also not considered mesh refinement schemes that may also \emph{remove} grid points.
Both of these considerations will be useful for reducing the computational cost of online schemes.

Unfortunately the bound \eqref{eq:min_IUT} is not tight. 
For the update to be triggered at time $t_k^u + \underline{\tau}^u$, the approximation error metric would need to be maximized $\forall t \in [t_k^u,t_k^u + \underline{\tau}^u)$. However, this cannot be the case, because the refinement metric is reset to zero at time $t_k^u$. Since the dynamics are assumed smooth and approximating polynomials piecewise smooth, the refinement metric cannot be maximized immediately.
Another important point is that bound \eqref{eq:min_IUT} will be finite even when $\hat{v}=0$ and $\hat{\theta}=0$. We have not used knowledge of the discretization/interpolation schemes, in which case the bound may be improved.

If \eqref{eq:pathConstriants} includes simple bounds and rate constraints on all state variables we may guarantee a finite $\hat{w}$. For example, the bounds $\underline{x} \leq x_i(t) \leq \overline{x}$ and $ \underline{\dot{x}} \leq \dot{x}_i(t) \leq \overline{\dot{x}} $ imply that
\begin{equation} \label{eq:maxScaling1}
   w_i \leq \max\Big\{\norm{\underline{x}},\norm{\overline{x}},\norm{\underline{\dot{x}}},\norm{\overline{\dot{x}}} \Big\},
\end{equation}
from which $\hat{w}$ may be determined. Similarly, if we make the additional assumption that the origin is an equilibrium --- common in some regulation-type problems --- then
\begin{equation} \label{eq:maxScaling2}
    w_i \leq \max\Big\{\norm{\underline{x}},\norm{\overline{x}},L_x\norm{\underline{x}},L_x\norm{\overline{x}} \Big\},
 \end{equation}
without needing to bound the state derivatives.

However, in certain cases it may not be natural or appropriate to enforce these constraints. Further, in non-regulation type problems, $x(t)$ may be significantly within bounds, and so determination of $\hat{w}$ through \eqref{eq:maxScaling1} or \eqref{eq:maxScaling2} could be extremely restrictive. Thus, in the general case it is difficult to quantitatively relate the relative local quadrature~$\epsilon_k$ with the absolute quadrature $\eta_k$. Instead, refining directly on the absolute error yields the following result.
\begin{corollary} \label{cor:absoluteMinIUT}
    For the ETC scheme resulting from solving \eqref{eq:Bolza_cont} to an accuracy satisfying \eqref{eq:AbsLocErrBound} at times $\mathcal{T}_u$ implicitly determined through triggering condition \eqref{eq:EventTriggeringCondition}, the minimum IUT
    \begin{equation}\label{eq:min_IUT_Absolute}
        \underline{\tau}_u := \frac{\Delta - \vecnorm{\hat{\eta}}_M + \vecnorm{M}2\hat{\theta}}{\frac{\vecnorm{\hat{\eta}}_M}{\sigma h} + \vecnorm{M}\left(L_x \Delta + \hat{v}\right)}
    \end{equation}
    is guaranteed to be strictly positive for $\Delta > \vecnorm{\hat{\eta}}_M + \vecnorm{M}2\hat{\theta}$.
\end{corollary}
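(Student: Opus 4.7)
The plan is to mirror the proof of Theorem~\ref{th:MinIUTRel} almost verbatim, since Corollary~\ref{cor:absoluteMinIUT} differs only in which quadrature of the local error is bounded. Recall the key definition $\epsilon_{k,i} = \eta_{k,i}/(w_i+1)$: in the relative case, rescaling by $w_i+1$ inside the integrand in \eqref{eq:th1_1} forced the factor $(\hat{w}+1)$ to reappear when the bound on $\epsilon_{k,i}$ was applied. If instead we refine so that $\eta_{k,i} \leq \hat{\eta}_i$ directly, this scaling step is skipped and $\hat{\eta}_i$ appears in place of $(w_i+1)\hat{\epsilon}_i$ throughout.

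Concretely, I would again set $t_i^u = 0$ without loss of generality and decompose the prediction error coordinate-wise as
\begin{equation}
\delta_i(t) = \int_0^t \bigl(\dot{\tilde{x}}_i(\tau) - f_i(\tau,\tilde{x}(\tau),u(\tau))\bigr)\,\mathrm{d}\tau + \int_0^t \bigl(f_i(\tau,\tilde{x}(\tau),u(\tau))-f_i(\tau,x(\tau),u(\tau))\bigr)\,\mathrm{d}\tau - \int_0^t v_i(\tau)\,\mathrm{d}\tau + \theta_i(t)-\theta_i(0).
\end{equation}
The first integrand is exactly $\varepsilon_i(\tau)$ from \eqref{eq:AbsoluteLocalError}, so the triangle inequality gives
\begin{equation}
\int_0^t \norm{\varepsilon_i(\tau)}\,\mathrm{d}\tau \leq \sum_{T\in\mathcal{T}_t}\int_T \norm{\varepsilon_i(\tau)}\,\mathrm{d}\tau = \sum_{T\in\mathcal{T}_t}\eta_{k,i} \leq \ceil*{t/\underline{h}}\,\hat{\eta}_i \leq \bigl(t/\underline{h}+1\bigr)\hat{\eta}_i,
\end{equation}
using the quasi-uniformity bound $\min_{T\in\mathcal{T}_t}\norm{T}\geq \sigma h =: \underline{h}$ exactly as in Theorem~\ref{th:MinIUTRel}.

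The remaining two integrals are handled identically to the original proof: Lipschitz continuity of $f$ in $x$ gives $L_x \int_0^t \norm{\delta_i(\tau)}\,\mathrm{d}\tau$, and the disturbance integral is bounded using $\vecnorm{v(t)}\leq \hat{v}$ and $\vecnorm{\theta(t)-\theta(0)}\leq 2\hat{\theta}$. Aggregating componentwise, applying the triangle inequality under $\vecnorm{\cdot}_M$, and using $\vecnorm{x}_M \leq \vecnorm{M}\vecnorm{x}$ together with the triggering-condition upper bound $\vecnorm{\delta(\tau)}_M \leq \Delta$ on the Lipschitz term, we obtain
\begin{equation}
\vecnorm{\delta(t)}_M \leq \bigl(t/\underline{h}+1\bigr)\vecnorm{\hat{\eta}}_M + \vecnorm{M}\bigl(2\hat{\theta} + t(L_x\Delta + \hat{v})\bigr).
\end{equation}
Solving for the largest $t$ such that the right-hand side stays below $\Delta$ yields \eqref{eq:min_IUT_Absolute}, and positivity of the numerator reduces to the stated condition $\Delta > \vecnorm{\hat{\eta}}_M + \vecnorm{M}2\hat{\theta}$.

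There is essentially no new obstacle here; the only subtlety worth double-checking is that the substitution of $\hat{\eta}_i$ for $(w_i+1)\hat{\epsilon}_i$ really does commute with the $p$-norm aggregation step, which it does because $\hat{\eta}_i$ is a constant not indexed by time. The result is a strictly simpler bound: it removes the dependence on the (potentially hard to quantify) scaling weights $\hat{w}_i$, at the cost of tolerances that must now be specified in absolute rather than relative terms.
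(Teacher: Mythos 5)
Your proof is correct and follows essentially the same route as the paper, whose own proof of Corollary~\ref{cor:absoluteMinIUT} is simply the instruction to repeat the argument of Theorem~\ref{th:MinIUTRel} with $w_i=0$ and $\hat{\epsilon}$ replaced by $\hat{\eta}$ --- exactly the substitution you carry out explicitly. Your spelled-out version, including the observation that the $(w_i+1)$ rescaling step is skipped and that $\hat{\eta}_i$ passes through the $p$-norm aggregation as a time-independent constant, is a faithful and complete elaboration of that procedure.
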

\begin{proof}
   Follows the procedure for the proof of Theorem~\ref{th:MinIUTRel}, but with all $w_i=0$ and replacing $\epsilon$ by $\eta$.
\end{proof}

\section{Online estimation of IUT} \label{sec:OnlineTrig}
The IUT bound \eqref{eq:min_IUT} can be improved by using the actual error information provided by solving \eqref{eq:Bolza_cont} with a mesh refinement procedure. 
Theorem~\ref{th:QET} is the first explicit definition of IUT to account for both problem data and the solution accuracy of \eqref{eq:Bolza_cont}.
Although the following results resemble STC conditions of the form \eqref{eq:STC_Prob_Condition}, we make no claims about stability or performance of control based on these conditions.
Additional assumptions on problem structure would be needed for this.
Instead, we simply provide estimates of IUTs, which may be useful for process/resource scheduling. 
  
First, we determine a lower bound on $\tau_i^u$, not dependent on~$\eta_k$, through using knowledge of the collocation constraints. This may be applicable in the case where we cannot guarantee a solution accuracy, such as if only a single NLP solve is performed.

\begin{theorem}[Collocation Triggering (CT)] \label{th:CollTrig}
	Let $L_{p,k}$ be the Lipschitz constant of the approximating polynomial  of the state used in segment $T_k$. The IUT
	\begin{equation} \label{eq:trig_CT}
	\begin{split}
	& \tau_i^\text{CT} = \sup \Bigg\{\tau_i \in \mathbb{R}^+ | \\ & \qquad  	\Bigg(\vecnorm{
		\sum_{k\in\mathcal{K}_t}\int_{T_k}  L_{p,k}\frac{\norm{\tau - t_k^\text{col}}}{(w + 1)} + L_x \frac{\norm{\tilde{x}_i(t_k^\text{col}) - \tilde{x}_i(\tau)}}{(w+1)}\mathrm{d}\tau}_M +
	\\ & \qquad \vecnorm{M}(2\hat{\theta} + \tau_i \hat{v}) \Bigg)e^{L_x\tau_i} \leq \Delta \Bigg\}
	\end{split}
	\end{equation}
	calculated from the approximate solution $\tilde{x}(\cdot)$ to problem~\eqref{eq:Bolza_cont} at time $t_i^u$ satisfies $\vecnorm{\delta(t_i^u+\tau_i)}_M \leq \Delta, \forall \tau \leq \tau_i^\text{CT}$.
\end{theorem}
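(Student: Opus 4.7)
The plan is to parallel the proof of Theorem~\ref{th:MinIUTRel}, replacing the role of the mesh-refinement accuracy certificate with a bound derived directly from the collocation equations. Assume without loss of generality that $t_i^u=0$. Starting from the same decomposition as in Theorem~\ref{th:MinIUTRel},
\begin{equation}
\delta_i(t) = \int_0^t\left[\dot{\tilde{x}}_i(\tau) - f_i(\tau,\tilde{x}(\tau),u(\tau))\right]\mathrm{d}\tau + \int_0^t\left[f_i(\tau,\tilde{x}(\tau),u(\tau)) - f_i(\tau,x(\tau),u(\tau))\right]\mathrm{d}\tau - \int_0^t v_i(\tau)\,\mathrm{d}\tau - \bigl(\theta_i(t)-\theta_i(0)\bigr),
\end{equation}
the second integral is absorbed through Lipschitz continuity of $f$ in $x$ (with constant $L_x$), and the remaining two terms contribute the noise-plus-disturbance bound $\vecnorm{M}(2\hat{\theta}+\tau_i\hat{v})$ appearing in \eqref{eq:trig_CT}. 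The novelty is how the first integrand is bounded without invoking $\hat{\epsilon}$.

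The key observation is that at every collocation point $t_k^\text{col}\in\mathcal{T}_{k,c}$ the transcription enforces $\dot{\tilde{x}}_i(t_k^\text{col}) = f_i(t_k^\text{col},\tilde{x}(t_k^\text{col}),u(t_k^\text{col}))$. For any $\tau$ in segment $T_k$, picking the nearest such $t_k^\text{col}$ and adding-and-subtracting these zero-residual values gives
\begin{equation}
\norm{\dot{\tilde{x}}_i(\tau) - f_i(\tau,\tilde{x}(\tau),u(\tau))} \leq \norm{\dot{\tilde{x}}_i(\tau) - \dot{\tilde{x}}_i(t_k^\text{col})} + \norm{f_i(\tau,\tilde{x}(\tau),u(\tau)) - f_i(t_k^\text{col},\tilde{x}(t_k^\text{col}),u(t_k^\text{col}))},
\end{equation}
with the two pieces bounded respectively by $L_{p,k}\norm{\tau-t_k^\text{col}}$ (Lipschitz continuity of the approximating polynomial on $T_k$) and by $L_x\norm{\tilde{x}_i(\tau)-\tilde{x}_i(t_k^\text{col})}$ (Lipschitz continuity of $f$ in $x$, with the input taken as piecewise constant on the sub-interval between consecutive collocation points). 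Dividing through by $(w_i+1)$ produces exactly the integrand appearing inside the $\vecnorm{\cdot}_M$ of \eqref{eq:trig_CT}.

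Summing this bound over the mesh segments $k\in\mathcal{K}_t$ overlapping $(0,t)$, then passing to the $M$-weighted $p$-norm exactly as in the final part of the proof of Theorem~\ref{th:MinIUTRel}, yields an integral inequality of the form
\begin{equation}
\vecnorm{\delta(t)}_M \leq A(t) + L_x\int_0^t \vecnorm{\delta(\tau)}_M\,\mathrm{d}\tau,
\end{equation}
where $A(t)$ denotes the sum of the collocation-integral term and $\vecnorm{M}(2\hat{\theta}+t\hat{v})$. The decisive step, absent from Theorem~\ref{th:MinIUTRel}, is the Gronwall--Bellman inequality: because $A(\cdot)$ is non-decreasing, it delivers $\vecnorm{\delta(t)}_M \leq A(t)\,e^{L_x t}$, which is precisely the expression being required to be at most $\Delta$ in \eqref{eq:trig_CT}. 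Taking the supremum of $\tau_i$ for which $A(\tau_i)e^{L_x\tau_i}\leq\Delta$ then gives the claim.

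The main obstacle I anticipate is bookkeeping rather than any conceptual leap. One must choose the nearest collocation point consistently across a non-uniform mesh and verify that the segmentwise bounds reassemble into the compact integral form in \eqref{eq:trig_CT}. A smaller subtlety is the interpretation of $L_{p,k}$: although the theorem names it the Lipschitz constant of the state polynomial, the decomposition really uses the Lipschitz constant of the derivative $\dot{\tilde{x}}$ on $T_k$; since both are polynomials on a bounded segment, such a constant exists, and the statement should be understood accordingly.
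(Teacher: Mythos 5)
Your proposal follows essentially the same route as the paper's own proof: the same decomposition of $\delta_i$ as in Theorem~\ref{th:MinIUTRel}, the same add-and-subtract of the zero collocation residual at $t_k^\text{col}$ followed by the triangle inequality, the same two Lipschitz bounds $L_{p,k}\norm{\tau-t_k^\text{col}}$ and $L_x\norm{\tilde{x}_i(t_k^\text{col})-\tilde{x}_i(\tau)}$, and the same Gronwall--Bellman closing step. Your side remark that $L_{p,k}$ must really be the Lipschitz constant of $\dot{\tilde{x}}$ on $T_k$ (not of $\tilde{x}$ itself) is a correct reading of what the argument actually uses.
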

\begin{proof}
	The proof is similar to that of Theorem~\ref{th:MinIUTRel}.
	We begin by noting that
	\begin{equation} \label{eq:CT1}
	\begin{split}
	\int_0^t  &\norm{\dot{\tilde{x}}_i(\tau) - f_i(\tau,\tilde{x}(\tau),u(\tau))}\mathrm{d}\tau  \leq \\ & \qquad \sum_{k\in\mathcal{K}_t}\int_{T_k}\norm{\dot{\tilde{x}}_i(t) - f_i(\tau,\tilde{x}(\tau),u(\tau))} \mathrm{d}\tau,
	\end{split}
	\end{equation}
	where the inequality follows from $\sup{T_k} \geq t$. Since \eqref{eq:dynODE} is satisfied exactly at points $t_k^\text{col}$, we may write \eqref{eq:CT1} as
	\begin{align}
	&\sum_{k\in\mathcal{K}_t}\int_{T_k} \Big|\dot{\tilde{x}}_i(\tau) - \dot{\tilde{x}}_i(t_k^\text{col}) + \\ & \qquad f_i(t_k^\text{col}, \tilde{x}(t_k^\text{col}), u(t_k^\text{col}))- f_i(\tau,\tilde{x}(\tau),u(\tau))\Big|\mathrm{d}\tau \\
	&\leq \sum_{k\in\mathcal{K}_t}\int_{T_k} \norm{\dot{\tilde{x}}_i(\tau) - \dot{\tilde{x}}_i(t_k^\text{col})} + \\& \qquad \norm{f_i(t_k^\text{col}, \tilde{x}(t_k^\text{col}), u(t_k^\text{col}))- f_i(\tau,\tilde{x}(\tau),u(\tau))}\mathrm{d}\tau.
	\end{align}
	Therefore, the relative error quadrature in the first $\mathcal{T}_t$ segments is bounded as
	\begin{align}
	\sum_{T\in\mathcal{T}_t}\epsilon_{k,i} &\leq \frac{1}{(w_i + 1)}\sum_{k\in\mathcal{K}_t}\int_{T_k} \norm{\dot{\tilde{x}}_i(\tau) - \dot{\tilde{x}}_i(t_k^\text{col})} + \\ & \qquad \norm{f_i(t_k^\text{col}, \tilde{x}(t_k^\text{col}), u(t_k^\text{col}))- f_i(\tau,\tilde{x}(\tau),u(\tau))}\mathrm{d}\tau \\
	&\leq \frac{1}{(w_i + 1)}\sum_{k\in\mathcal{K}_t}\int_{T_k} L_{p,k}\norm{\tau - t_k^\text{col}} + \\ & \qquad L_x\norm{\tilde{x}_i(t_k^\text{col}) - \tilde{x}_i(\tau)}\mathrm{d}\tau 
	\end{align}
	where the second inequality results from noting that all polynomials $\in \mathcal{P}_p(T_k)$ defined on bounded set $T_k$ are Lipschitz functions, and using $L_{p,k}$ to denote the Lipschitz constant of the approximating polynomial used in segment~$T_k$.
	Summing this error expression for each  state results in
	\begin{equation}
	\vecnorm{\sum_{T\in\mathcal{T}_t}\epsilon_k}_M  \leq \vecnorm{
	\sum_{k\in\mathcal{K}_t}\int_{T_k}  L_{p,k}\frac{\norm{\tau - t_k^\text{col}}}{(w + 1)} + L_x\frac{\norm{\tilde{x}_i(t_k^\text{col}) - \tilde{x}_i(\tau)}}{(w+1)}\mathrm{d}\tau
	}_M.
	\end{equation}
	Substituting this quadrature error bound into \eqref{eq:PreErrorNorm1} in place of $(t\underline{h}^{-1}+1)\vecnorm{(\hat{w}+1)\hat{\epsilon}}$ results in
	\begin{align}
	\vecnorm{\delta(t)}_M  &\leq 
 	\vecnorm{
 	\sum_{k\in\mathcal{K}_t}\int_{T_k}  L_{p,k}\frac{\norm{\tau - t_k^\text{col}}}{(w + 1)} + L_x\frac{\norm{\tilde{x}_i(t_k^\text{col}) - \tilde{x}_i(\tau)}}{(w+1)}\mathrm{d}\tau
 	}_M  \\&
 	+ \vecnorm{M}(2\hat{\theta} + t \hat{v}) + 
	L_x\int_0^t \vecnorm{\delta(\tau)}_M \mathrm{d}\tau
	\end{align} 
	Using the integral form of the Gronwall-Bellman inequality we find
	\begin{align}
	\vecnorm{\delta(t)}_M \leq &
	\Bigg(\vecnorm{
		\sum_{k\in\mathcal{K}_t}\int_{T_k}  L_{p,k}\frac{\norm{\tau - t_k^\text{col}}}{(w + 1)} + L_x\frac{\norm{\tilde{x}_i(t_k^\text{col}) - \tilde{x}_i(\tau)}}{(w+1)}\mathrm{d}\tau
	}_M \\ & \qquad
	+ \vecnorm{M}(2\hat{\theta} + t \hat{v}) \Bigg)e^{L_xt}
	\end{align} 
	Restricting the RHS to be $ \leq \Delta$ yields the explicitly defined IUT given in \eqref{eq:trig_CT}.
\end{proof}

For low-order $h$ methods in particular, $L_{p,k}$ may be easily evaluated for each segment $T_k$. However, as we have no guarantees on the accuracy of the solution resulting from these polynomials, we cannot guarantee that $\tau_i^\text{CT}$ will be greater than $\underline{\tau}_u$ from Theorem~\ref{th:MinIUTRel}.

\begin{theorem}[Quadrature Error Triggering (QET)]\label{th:QET}
	The IUT
	\begin{equation}\label{eq:trig_QET}
	\tau_i^\text{QET} = \sup  \Bigg\{\tau >0  | \left(\vecnorm{\sum_{k\in\mathcal{K}_t}\epsilon_k}_M +  \vecnorm{M}\left(2\hat{\theta}  +\hat{v}\tau\right)\right) e^{ L_x\tau} \leq \Delta  \Bigg\}
	\end{equation}
	determined from $\tilde{x}(\cdot)$, $\epsilon_k$, which approximately solves problem \eqref{eq:Bolza_cont} at time $t_i^u$, satisfies $\vecnorm{\delta(t_i^u+\tau)}_M \leq \Delta, \forall \tau \leq \tau_i^\text{QET}$. Furthermore, $\tau_i^\text{QET} \geq \underline{\tau}^u$.
\end{theorem}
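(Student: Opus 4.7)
The plan is to mirror the structure of the proof of Theorem~\ref{th:MinIUTRel}, but with two key replacements: substitute the worst-case accuracy bound by the actual quadrature errors $\epsilon_k$ produced by the solved NLP, and replace the direct substitution $\vecnorm{\delta(\tau)}_M \leq \Delta$ used inside the integral with the Gronwall--Bellman inequality.

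First, I would follow the derivation of Theorem~\ref{th:MinIUTRel} verbatim up to equation \eqref{eq:PreErrorNorm1}, which gives an integral inequality of the form
\begin{equation}
\vecnorm{\delta(t)}_M \leq B(t) + L_x \int_0^t \vecnorm{\delta(\tau)}_M \mathrm{d}\tau,
\end{equation}
where $B(t)$ collects the approximation, measurement-noise, and process-disturbance contributions. In place of the coarse upper bound $\left(\tfrac{t}{\underline h}+1\right)\vecnorm{(\hat w+1)\hat\epsilon}_M$ used in Theorem~\ref{th:MinIUTRel}, I would use the mesh refinement output directly: the cumulative relative local error over the segments $\mathcal{K}_t$ touched by $(0,t)$ is exactly $\sum_{k\in\mathcal{K}_t}\epsilon_k$, which satisfies $\vecnorm{\sum_{k\in\mathcal{K}_t}\epsilon_k}_M \leq \left(\tfrac{t}{\underline h}+1\right)\vecnorm{(\hat w+1)\hat\epsilon}_M$. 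Bounding the noise and disturbance contributions exactly as before gives $B(t) = \vecnorm{\sum_{k\in\mathcal{K}_t}\epsilon_k}_M + \vecnorm{M}(2\hat\theta + \hat v t)$, which is non-decreasing in $t$.

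Next, instead of pre-bounding $\vecnorm{\delta(\tau)}_M \leq \Delta$ in the Lipschitz integral (as Theorem~\ref{th:MinIUTRel} does), I would apply the integral form of the Gronwall--Bellman inequality to the above, yielding $\vecnorm{\delta(t)}_M \leq B(t) e^{L_x t}$. Requiring the right-hand side to be at most $\Delta$ gives exactly the definition of $\tau_i^{\text{QET}}$ in \eqref{eq:trig_QET}, establishing the first claim.

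The main obstacle is the comparison $\tau_i^{\text{QET}} \geq \underline{\tau}^u$, because the two bounds have genuinely different functional forms (an exponential multiplying $B$ versus an affine term $L_x\Delta t$ added to it). My plan is to evaluate the QET bound at $t = \underline{\tau}^u$ and show it does not exceed $\Delta$. First, using $\epsilon_{k,i}\leq \hat\epsilon_i$ and the mesh counting argument from Theorem~\ref{th:MinIUTRel} shows that $B(\underline{\tau}^u)$ is no larger than $\vecnorm{(\hat w+1)\hat\epsilon}_M(\underline{\tau}^u/\underline h + 1) + \vecnorm{M}(2\hat\theta + \hat v \underline{\tau}^u)$, which by the definition of $\underline{\tau}^u$ equals $\Delta - \vecnorm{M}L_x\Delta \underline{\tau}^u = \Delta(1 - L_x\underline{\tau}^u \vecnorm{M})$. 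It then suffices to verify the elementary scalar inequality $(1 - x)e^{x} \leq 1$ for $x \geq 0$, which follows from differentiation. Combining these, $B(\underline{\tau}^u) e^{L_x \underline{\tau}^u} \leq \Delta$, so $\underline{\tau}^u$ lies inside the feasible set defining $\tau_i^{\text{QET}}$, giving $\tau_i^{\text{QET}} \geq \underline{\tau}^u$. Care will be needed with the $\vecnorm{M}$ scaling on the $L_x\Delta t$ term to make the algebra line up cleanly, but the underlying mechanism is simply that Gronwall--Bellman is tighter than direct pre-substitution, and actual errors are no larger than their refinement-enforced ceilings.
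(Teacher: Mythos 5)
Your proposal follows the paper's own route for the first claim almost exactly: replace the worst-case term $(t\underline{h}^{-1}+1)\vecnorm{(\hat w+1)\hat\epsilon}_M$ in \eqref{eq:PreErrorNorm1} by the actual accumulated quadrature errors over the segments intersecting $(0,t)$, apply the integral form of the Gronwall--Bellman inequality to obtain \eqref{eq:PreErrorNorm2}, and restrict the right-hand side to be at most $\Delta$ to arrive at \eqref{eq:trig_QET}. One detail to carry through: the relative errors enter as $(w_i+1)\sum_{k\in\mathcal{K}_t}\epsilon_{k,i}$ (the $(w+1)$ scaling must be reinstated when converting the relative quadrature back into an absolute contribution to $\delta$), as in \eqref{eq:PreErrorNorm2}; your $B(t)$ omits this factor, though the theorem statement itself does too. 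Where you genuinely depart from the paper is the second claim, $\tau_i^{\text{QET}}\geq\underline{\tau}^u$: the paper disposes of this in one line by noting only that the actual quadrature errors are dominated by their refinement-enforced ceilings, which does not by itself reconcile the exponential Gronwall bound with the affine bound obtained in Theorem~\ref{th:MinIUTRel} by pre-substituting $\vecnorm{\delta(\tau)}_M\leq\Delta$ under the integral. Your argument --- evaluate $B$ at $\underline{\tau}^u$, use the definition of $\underline{\tau}^u$ to get $B(\underline{\tau}^u)\leq\Delta(1-L_x\underline{\tau}^u)$, and invoke $(1-x)e^{x}\leq 1$ for $x\geq 0$ --- supplies exactly the missing step and is the correct mechanism; it makes explicit that Gronwall--Bellman with the same data can only enlarge the admissible interval. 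Your caution about the $\vecnorm{M}$ factor is also well placed: in the paper's Theorem~\ref{th:MinIUTRel} the term $L_x\Delta t$ is grouped under $\vecnorm{M}(\cdot)$ even though the ETC condition bounds the weighted norm $\vecnorm{\delta}_M\leq\Delta$ directly, so no $\vecnorm{M}$ is needed there; with that correction your algebra closes cleanly, whereas with the paper's grouping the argument would additionally require $\vecnorm{M}\geq 1$.
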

\begin{proof}
	Employing the notation introduced in the proof of Theorem~\ref{th:MinIUTRel}, we begin by noting that
	\begin{equation}
	(w_i+1) \int_0^t\norm{\alpha_i(\tau)}\mathrm{d}\tau \leq (w_i+1)\sum_{k \in \mathcal{K}_t} \epsilon_{k,i},
	\end{equation}
	where the inequality follows from $\sup \norm{\mathcal{T}_t} \geq t$. This is similar to the steps taken in Theorem~\ref{th:MinIUTRel}, except we use the actual error quadratures associated with the problem instead of the bounds $\hat{\epsilon}$.
	Substituting the actual quadrature errors into \eqref{eq:PreErrorNorm1} in place of $(t\underline{h}^{-1}+1)\vecnorm{(\hat{w}+1)\hat{\epsilon}}$ results in
	\begin{equation}
	\vecnorm{\delta(t)}_M \leq 
	\vecnorm{ (w+1)\sum_{k\in\mathcal{K}_t}\epsilon_k}_M	+ \vecnorm{M}(2\hat{\theta} + t \hat{v}) + 
	L_x\int_0^t \vecnorm{\delta(\tau)}_M \mathrm{d}\tau
	\end{equation}
	We again use the integral form of the Gronwall-Bellman inequality to find
	\begin{equation} \label{eq:PreErrorNorm2}
	\vecnorm{\delta(t)}_M \leq 
	\left(
	\vecnorm{ (w+1)\sum_{k\in\mathcal{K}_t}\epsilon_k}_M	+ \vecnorm{M}(2\hat{\theta} + t \hat{v}) \right)e^{L_xt}
	\end{equation} 
	Restricting the RHS to be $ \leq \Delta$ yields the explicitly defined IUT given in \eqref{eq:trig_CT}.

	Restricting the RHS to be $ \leq \Delta$ results in the explicitly defined IUT given in \eqref{eq:trig_QET}. As
	\begin{equation}
	\sum_{i=1}^n (w_i+1)\sum_{k \in \mathcal{K}_t} \epsilon_{k,i} \leq  \left(\frac{t}{\underline{h}}+1\right)\hat{\epsilon}_i,
	\end{equation}
	we guarantee that $\tau^\text{QET}_i \geq \underline{\tau}^u$.
\end{proof}

Another advantage of the online calculation is that it does not require explicit knowledge of bounds on $w$. 
In the same fashion as Corollary~\ref{cor:absoluteMinIUT}, we may determine a similar triggering estimate if considering refinement on the absolute local error of each state.

The mesh refinement procedure may not return the approximation error over arbitrary intervals. Software such as \texttt{ICLOCS2} \cite{nie2018iclocs2} returns $\eta_k$ or $\epsilon_k$ corresponding to $T_k$. Depending on how the integral is evaluated we may not know the distribution of $\varepsilon(t)$ over this interval. 

Up to this point the analysis has made no assumptions on problem structure. As such it is equally applicable to the different classes of event-triggered MPC formulations proposed in the literature which use triggering conditions of the form \eqref{eq:trigGeneral}.

\section{Constraint tightening} \label{sec:constraintTightening}

The proposed triggering conditions do not account for state proximity to constraint bounds. We wish to guarantee satisfaction of constraints \eqref{eq:pathConstriants} for the entire prediction horizon $t \in [t_0, t_f]$. 
However, when solving the transcribed NLP corresponding to problem \eqref{eq:Bolza_cont}, the constraints $c(\tilde{x}(t),u(t),t) \leq 0$ are only enforced on the approximate trajectory $\tilde{x}(\cdot)$ at a finite number of time instances: the mesh points $\mathcal{T}_m$. 
One way to guarantee that the actual state $x(\cdot)$ complies with state constraints is to employ constraint tightening methods. 
For the following analysis we only consider the case where $c(\tilde{x}(t),u(t),t) \leq 0$ is equivalent to the inclusion
\begin{equation}
    \tilde{x}(t) \in \mathbb{X}, \forall t \in [t_0,t_f],
\end{equation}
where the constraint set $\mathbb{X}$ satisfies the following assumption.
\begin{assumption}\label{ass:Nonempty}
The set $\mathbb{X}$ is closed and convex, and the reduced set
\begin{equation}
    \mathbb{X} \ominus \left\{x \in \mathbb{R}^n \mid x \leq \Delta \right\},
\end{equation}
where $\ominus$ denotes the Pontryagin difference, is non-empty.
\end{assumption}

The Pontryagin difference of two sets $\mathcal{A}$ and $\mathcal{B}$ is defined as $\mathcal{A} \ominus \mathcal{B} := \{a \in \mathbb{R}^n | a+b \in \mathcal{A}, \forall b \in \mathcal{B}\}$.
The Minkowski sum $\oplus$ of $\mathcal{A}$ and $\mathcal{B}$ is $\mathcal{A} \oplus \mathcal{B} := \{a + b | a \in \mathcal{A}, b \in \mathcal{B}\}$. We would not expect this assumption to be particularly restrictive, since $\Delta$ will be quite small in practice.
In the following result we evaluate by how much the state constraints on the approximate solution $\tilde{x}$ must be tightened in order to guarantee that the actual state $x(t) \in \mathbb{X}$.

\begin{lemma}\label{lemma:ConstraintTightening}
    Consider a mesh refinement scheme using the relative local error \eqref{eq:RelLocErrBound} or absolute local error \eqref{eq:AbsoluteQuadrature} as a termination criteria.
    If the approximate trajectory at time $t_k^u+s, s \geq 0,$ predicted from measurement at time $t_k^u$ satisfies $\tilde{x}(t_k^u,s) \in \mathbb{X}_{s},$ where
    \begin{equation} \label{eq:infTightening}
        \mathbb{X}_{s} := \mathbb{X} \ominus \mathcal{B}_{s},
    \end{equation}
    and the ball $\mathcal{B}_s$ is defined as
	\begin{equation}
	\begin{split}
	\mathcal{B}_{s} := & \Bigg\{x \in \mathbb{R}^n \mid  \\& \qquad
	\vecnorm{x}_M \leq \min\Big\{  \left(
	\vecnorm{ \sum_{k\in\mathcal{K}_s}\hat{\xi}}_M	+ \vecnorm{M}(2\hat{\theta} + s \hat{v}) \right)e^{L_xs}, \Delta \Big\}\Bigg\},
	\end{split}
	\end{equation}
	where $\hat{\xi}:=\left(\hat{w}+1\right)\hat{\epsilon}$ or $\hat{\xi}:=\hat{\eta}$ depending on the refinement termination criteria, then the actual state trajectory satisfies $x(s) \in \mathbb{X}$.
\end{lemma}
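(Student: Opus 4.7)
The plan is to show that the prediction error $\delta(s) := \tilde{x}(t_k^u{+}s) - x(t_k^u{+}s)$ lies in the ball $\mathcal{B}_s$, and then exploit the symmetry of $\mathcal{B}_s$ together with the definition of the Pontryagin difference to conclude $x(s) \in \mathbb{X}$. Concretely, observe that $x(s) = \tilde{x}(s) - \delta(s)$, so $x(s) \in \mathbb{X}$ follows as soon as we certify $\delta(s) \in \mathcal{B}_s$ and use $\tilde{x}(s) \in \mathbb{X} \ominus \mathcal{B}_s$.

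First I would reproduce the bound on $\vecnorm{\delta(s)}_M$ from the derivation of Theorem~\ref{th:MinIUTRel} (or Corollary~\ref{cor:absoluteMinIUT}, depending on whether the refinement scheme is driven by \eqref{eq:RelLocErrBound} or \eqref{eq:AbsLocErrBound}), but stopped one step earlier --- before the worst-case $\ceil{s/\underline{h}}$ mesh-count estimate is inserted. This yields the integral inequality
\begin{equation}
\vecnorm{\delta(s)}_M \leq \vecnorm{\sum_{k\in\mathcal{K}_s}\hat{\xi}}_M + \vecnorm{M}\left(2\hat{\theta} + s\hat{v}\right) + L_x \int_0^s \vecnorm{\delta(\tau)}_M \, \mathrm{d}\tau,
\end{equation}
with $\hat{\xi} = (\hat{w}+1)\hat{\epsilon}$ or $\hat{\xi} = \hat{\eta}$, depending on the chosen refinement criterion. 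Applying the integral form of the Gronwall--Bellman inequality exactly as in the proof of Theorem~\ref{th:QET} converts this into the exponential bound that appears inside the definition of $\mathcal{B}_s$.

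Second, I would combine this with the ETC guarantee: as long as no update has been triggered since $t_k^u$, condition \eqref{eq:EventTriggeringCondition} enforces $\vecnorm{\delta(s)}_M \leq \Delta$. Taking the minimum of the two upper bounds shows $\vecnorm{\delta(s)}_M$ does not exceed the quantity defining $\mathcal{B}_s$, so $\delta(s) \in \mathcal{B}_s$. Because $\mathcal{B}_s$ is a sub-level set of the norm $\vecnorm{\cdot}_M$, it is symmetric about the origin, so $-\delta(s) \in \mathcal{B}_s$ as well. The assumption $\tilde{x}(s) \in \mathbb{X} \ominus \mathcal{B}_s$ together with the definition of the Pontryagin difference then gives $\tilde{x}(s) + b \in \mathbb{X}$ for every $b \in \mathcal{B}_s$, in particular $b = -\delta(s)$, so $x(s) \in \mathbb{X}$. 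Assumption~\ref{ass:Nonempty} is invoked only to guarantee the tightened set is non-empty so the hypothesis is not vacuous.

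The main obstacle is bookkeeping rather than analysis: I have to be sure that the Gronwall step can be invoked with the \emph{actual} partial quadrature $\sum_{k\in\mathcal{K}_s}\hat{\xi}$ rather than its worst-case-per-unit-time overestimate, and that the ETC bound $\vecnorm{\delta(s)}_M\le\Delta$ is legitimately available on $[t_k^u, t_k^u{+}s)$ (it is, as long as $s$ is smaller than the next inter-update time, which is guaranteed whenever we are actually using $\tilde{x}(\cdot,t_k^u)$ as the valid prediction). Once these two pieces are in place, the Pontryagin-difference argument closes the proof in one line.
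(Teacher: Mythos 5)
Your proposal is correct and follows essentially the same route as the paper's proof: the Gronwall--Bellman bound from Theorem~\ref{th:QET} with the error bounds $\hat{\xi}$ substituted for the actual quadratures, the minimum with $\Delta$ justified by the ETC condition, and the Pontryagin-difference definition to conclude $x(s)\in\mathbb{X}$. The only difference is that you spell out the final step more carefully than the paper does --- explicitly noting that $\mathcal{B}_s$ is symmetric so $-\delta(s)\in\mathcal{B}_s$, which is the detail that actually licenses $\tilde{x}(s)-\delta(s)\in\mathbb{X}$ --- so your version is, if anything, slightly more complete.
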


\begin{proof}
	Take the inequality \eqref{eq:PreErrorNorm2} used in the proof of Theorem~\ref{th:QET}, but using error bounds instead of actual errors. We directly bound the norm between the approximate predicted and measured state at time $t_k^u+s$ by
	\begin{equation}
	\vecnorm{\delta(t_k^u+s)}_M \leq  \left(
	\vecnorm{ \sum_{k\in\mathcal{K}_s}\hat{\xi}}_M	+ \vecnorm{M}(2\hat{\theta} + s \hat{v}) \right)e^{L_xs}.
	\end{equation}
	Since the ETC scheme guarantees that the control and associated prediction is updated if $\vecnorm{\delta(s)} \geq \Delta$, we can make the bound less conservative, to become 
	\begin{equation}
	\vecnorm{\delta(t_k^u+s)}_M \leq \min\left\{\left(
	\vecnorm{ \sum_{k\in\mathcal{K}_s}\hat{\xi}}_M	+ \vecnorm{M}(2\hat{\theta} + s \hat{v}) \right)e^{L_xs}, \Delta\right\}.
	\end{equation}
	Tightening the constraint set $\mathbb{X}$ by the RHS of this inequality results in $\mathbb{X}_s$ and guarantees that the actual trajectory $x(t) \in \mathbb{X}, \forall t \in [t_k^u,t_{k+1}^u]$.
	By Assumption~\ref{ass:Nonempty}, $\mathbb{X}_{s}$ is nonempty $\forall s \in [t_k^u,t_k^u+T_\text{op})$.
\end{proof}

Lemma~\ref{lemma:ConstraintTightening} shows that the amount of constraint tightening required depends on the solution accuracy, allowed prediction error $\Delta$, as well as measurement and disturbance noise bounds. This tightening is shown in Figure~\ref{fig:ConstraintTighteningExample}. 
However, the constraint $\tilde{x}(t) \in \mathbb{X}_s$ would still need to be applied at the infinite number of time instances $[t_k^u,t_k^u + T_\text{op}]$. 
We now adapt \cite[Theorem 1]{fontes2018guaranteed} to show how further tightening of the constraint sets $\mathbb{X}_{s}$ at the mesh points $\mathcal{T}_m$ can be used to guarantee constraint satisfaction of the continuous-time OCP.

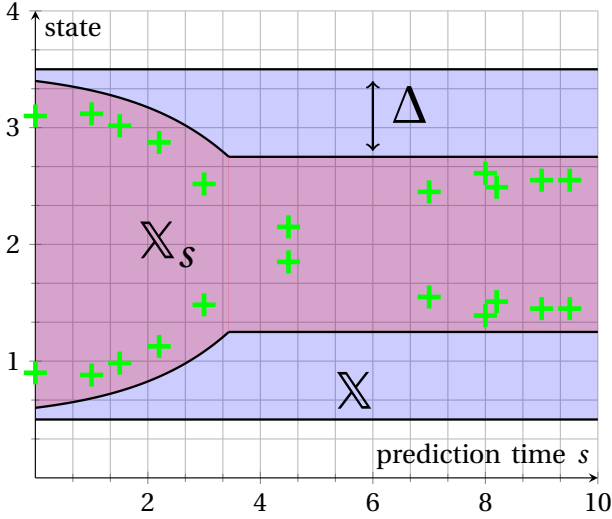
\begin{figure}[tb]
  \centering
  \resizebox{0.95\columnwidth}{!}{%
    \begin{tikzpicture} 
    \begin{axis}[
        xmin=0, xmax=10,    
        ymin=0, ymax=4,
        axis lines=middle,
         grid=both,xlabel={prediction time $s$},ylabel={state},minor tick num=2
      ]  
      \addplot[samples=100, domain=0:10, name path=A, style={thick}] {3.5}; 
      \addplot[samples=50, domain=0:10,name path=B, style={thick}] {0.5};
      \addplot[samples=50, domain=3.4387:10,name path=C, style={thick}] {2.75};
      \addplot[samples=50, domain=3.4387:10,name path=D, style={thick}] {1.25};
      \addplot[samples=50, domain=0:3.4387,name  path=E, style={thick}] {3.5 - (0.1 + 0.01*x)*pow(e,0.5*x)};
	  \addplot[samples=50, domain=0:3.4387,name  path=F, style={thick}] {0.5 + (0.1 + 0.01*x)*pow(e,0.5*x)};
      \path[name path=xaxis] (\pgfkeysvalueof{/pgfplots/xmin}, 0) -- (\pgfkeysvalueof{/pgfplots/xmax},0);
      \addplot[blue, opacity=0.2] fill between[of=A and B, soft clip={domain=0:10}];
      \addplot[red, opacity=0.2] fill between[of=C and D, soft clip={domain=3:10}];
      \addplot[red, opacity=0.2] fill between[of=E and F, soft clip={domain=0:4}];
	  \draw[<->, style={thick}](6,2.8)--(6,3.4);
	  \node[text width=3cm] at (8.5,3.2) {\huge $\Delta$};
	  \node[text width=3cm] at (7.5,0.75) {\huge $\mathbb{X}$};
	  \node[text width=3cm] at (4,2) {\huge $\mathbb{X}_s$};
	  \addplot[green, mark=+, only marks,mark size=4pt, ultra thick] coordinates{
        (0,	0.900000000000000)
        (1,	0.881359339777014)
        (1.5,	0.983455001910458)
        (2.2,	1.12650825492147)
        (3,	1.48261957914395)
        (4.5,	1.85000000000000)
        (7,	1.55000000000000)
        (8,	1.39000000000000)
        (8.2,	1.51000000000000)
        (9,	1.45000000000000)
        (9.5,	1.45000000000000)
   };
   	  \addplot[green, mark=+, only marks,mark size=4pt, ultra thick] coordinates{
        (0,	3.10000000000000)
        (1,	3.11864066022299)
        (1.5,	3.01654499808954)
        (2.2,	2.87349174507854)
        (3,	2.51738042085605)
        (4.5,	2.15000000000000)
        (7,	2.45000000000000)
        (8,	2.61000000000000)
        (8.2,	2.49000000000000)
        (9,	2.55000000000000)
        (9.5,	2.55000000000000)
   };
	  \end{axis}
 \end{tikzpicture}
 }
\caption[TODO]{A representation of the proposed constraint tightening method for the constraint set $\mathbb{X}:= [\underline{x},\overline{x}]$. $\mathbb{X}$ is tightened to the set $\mathbb{X}_s$ due to the difference between approximate and actual trajectories. The constraints enforced at discrete mesh points are further tightened depending on the length of the succeeding segment, shown by the green crosses.}
\label{fig:ConstraintTighteningExample}
\end{figure}

Similar to the method proposed in \cite{fontes2018guaranteed}, we define the signed distance function from a point $x\in\mathbb{R}^n$ to the set $\mathcal{A}$ as
\begin{equation}
    d_{\mathcal{A}}(x) := \min_{y\in\mathcal{A}}\vecnorm{x-y} - \inf_{z\in \mathcal{A}^C}\vecnorm{x-z},
\end{equation}
where $\mathcal{A}^C$ denotes the complement of $\mathcal{A}$.
With this distance function we may rewrite the tightened state constraints from Lemma~\ref{lemma:ConstraintTightening} as
\begin{equation}
    d_{\mathbb{X}_s}(\tilde{x}(t_k^u,s)) \leq 0, \forall s \in [t_k^u,t_k^u + T_\text{op}].
\end{equation}
An important advantage of using this set-based approach with the distance operator, compared to the standard inequality constraint formulation, is that we can normalize multiple different constraints and then treat them all in a uniform fashion --- as exemplified in the application in \cite{fontes2018guaranteed}.
The next step is to tighten the constraints on the approximate system at the finite set of mesh points $\mathcal{T}_m$, in order to guarantee constraint satisfaction of the real system over the entire time horizon. This is formalized in the following theorem.

\begin{theorem} \label{th:tight}
	Consider a mesh refinement scheme using the relative local error \eqref{eq:RelLocErrBound} or absolute local error \eqref{eq:AbsoluteQuadrature} as a termination criterion. If we define
	\begin{equation}
	\alpha_k \geq \vecnorm{\hat{\xi}} + L_f\norm{T_k} + g(t_k^m,t_{k+1}^m)
	\end{equation}
	and the predicted trajectory $\tilde{x}(\cdot)$ satisfies
	\begin{equation}
	d_{\mathbb{X}_{t_k^m}}(\tilde{x}(t_k^m)) \leq -\alpha_k, \forall k \in \mathcal{K}_h,
	\end{equation}
	then
	\begin{equation}
	d_{\mathbb{X}_{s}}(\tilde{x}(s)) \leq 0, \forall s \in [t_k^m,t_{k+1}^m], k \in \mathcal{K}_h
	\end{equation}
	and 
	\begin{equation}
	x(t) \in \mathbb{X}, \forall s \in [t_i^u, t_{i+1}^u].
	\end{equation}
\end{theorem}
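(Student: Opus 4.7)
The plan is to prove the two conclusions in sequence: first, establish that the per-mesh-point slack $\alpha_k$ is enough to force $d_{\mathbb{X}_s}(\tilde{x}(s)) \leq 0$ throughout each segment $[t_k^m, t_{k+1}^m]$; second, lift this to $x(t) \in \mathbb{X}$ by a direct appeal to Lemma~\ref{lemma:ConstraintTightening}. The second step is immediate once the first is in hand, and concatenating the bound across all segments in $\mathcal{T}_h$ covers $[t_i^u, t_{i+1}^u]$. The strategy closely mirrors \cite{fontes2018guaranteed}, but now the constraint set itself varies with $s$ through a Pontryagin difference.

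For the first step, I would telescopically decompose the signed distance on $[t_k^m, t_{k+1}^m]$ as
\begin{equation}
\begin{aligned}
d_{\mathbb{X}_s}(\tilde{x}(s)) = \; & d_{\mathbb{X}_{t_k^m}}(\tilde{x}(t_k^m)) \\
& + \bigl[d_{\mathbb{X}_{t_k^m}}(\tilde{x}(s)) - d_{\mathbb{X}_{t_k^m}}(\tilde{x}(t_k^m))\bigr] \\
& + \bigl[d_{\mathbb{X}_s}(\tilde{x}(s)) - d_{\mathbb{X}_{t_k^m}}(\tilde{x}(s))\bigr],
\end{aligned}
\end{equation}
and bound each contribution separately. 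The first term is at most $-\alpha_k$ by hypothesis. For the second, $\mathbb{X}_{t_k^m}$ inherits closedness and convexity from Assumption~\ref{ass:Nonempty} and the Pontryagin construction, so $d_{\mathbb{X}_{t_k^m}}(\cdot)$ is $1$-Lipschitz, and the bracket is bounded by $\vecnorm{\tilde{x}(s) - \tilde{x}(t_k^m)}$. Substituting $\dot{\tilde{x}} = f(\cdot,\tilde{x},u) + \varepsilon$, using $\vecnorm{f} \leq L_f$, and applying the refinement criterion to bound $\int_{T_k}\vecnorm{\varepsilon(\tau)}\mathrm{d}\tau$ through $\vecnorm{\hat{\xi}}$ gives $\vecnorm{\tilde{x}(s) - \tilde{x}(t_k^m)} \leq L_f\norm{T_k} + \vecnorm{\hat{\xi}}$. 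For the third bracket, the monotone shrinkage of $\mathbb{X}_s$ across the segment is controlled by the growth of the radius of $\mathcal{B}_s$ on $[t_k^m, t_{k+1}^m]$, which I would take as the definition of $g(t_k^m,t_{k+1}^m)$. Summing the three bounds gives
\begin{equation}
d_{\mathbb{X}_s}(\tilde{x}(s)) \leq -\alpha_k + \vecnorm{\hat{\xi}} + L_f\norm{T_k} + g(t_k^m,t_{k+1}^m) \leq 0,
\end{equation}
by the hypothesis on $\alpha_k$, which establishes the intermediate conclusion; Lemma~\ref{lemma:ConstraintTightening} then delivers $x(t) \in \mathbb{X}$ on each segment and hence on $[t_i^u,t_{i+1}^u]$.

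The main obstacle I anticipate is the rigorous handling of the signed distance under a time-varying constraint set defined by a Pontryagin difference with a growing ball. The third telescopic term must be bounded by an explicit function $g$ that is both computable from the problem data (so it can genuinely appear inside the NLP constraint) and tight enough that the compounded tightening $\alpha_k$ does not render the NLP infeasible in practice. A secondary care point is ensuring $\mathbb{X}_s$ remains non-empty throughout $T_k$ so that $d_{\mathbb{X}_s}$ is well-defined; this follows from Assumption~\ref{ass:Nonempty} combined with the saturation of $\mathcal{B}_s$ at the ball of radius $\Delta$ already built into Lemma~\ref{lemma:ConstraintTightening}.
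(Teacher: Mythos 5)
Your proposal is correct and follows essentially the same route as the paper: both bound the change in the signed distance over a segment by the trajectory displacement $\vecnorm{\hat{\xi}} + L_f\norm{T_k}$ (obtained from the refinement residual bound plus $\vecnorm{f}\leq L_f$) together with the set-shrinkage term $g$, and then invoke Lemma~\ref{lemma:ConstraintTightening} for the continuous-time state. The only difference is presentational --- you sum the telescoped bounds directly, whereas the paper packages the final step as a contradiction argument (assuming a first crossing time $s_1$ with $d_{\mathbb{X}_{s_1}}(\tilde{x}(s_1))=0$, following \cite{fontes2018guaranteed}) --- and both versions rely on the same implicit monotonicity of $g$ in its first argument.
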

\begin{proof}
	From the definition of the tightened constraint set $\mathbb{X}_{s}$ in Lemma~\ref{lemma:ConstraintTightening},
	\begin{equation}
	\mathbb{X}_{s} \subseteq \mathbb{X}_{t_k^m}, \forall s \in (t_k^m,t_{k+1}^m].
	\end{equation}
	We use the notation $\partial \mathcal{A}$ to denote the boundary of the set $\mathcal{A}$. 
	Since $\mathbb{X}$ is convex by Assumption~\ref{ass:Nonempty}, the sets $\mathbb{X}_s$ will be convex by construction. Therefore the shortest distance from any point in $\partial\mathbb{X}_s$ to $\partial \mathbb{X}_{t_k^m}$ satisfy
	\begin{equation}
	\min_{y_1 \in \partial \mathbb{X}_{s}} \vecnorm{x_1-y_1} = 
	\min_{y_2 \in \partial \mathbb{X}_{s}} \vecnorm{x_2-y_2}, \forall x_1,x_2 \in \partial \mathbb{X}_{t_k^m}.
	\end{equation}
	From the definition of the sets $\mathbb{X}_s$ in \eqref{eq:infTightening} we define this distance $g : \mathbb{R}^+ \times \mathbb{R}^+ \rightarrow \mathbb{R}^+$ between $\mathbb{X}_S$ and $\mathbb{X}_{t_k^m}$ as
	\begin{multline}
	g(s,t_k^m) := 
	\min \big\{(\beta + \vecnorm{M}\hat{v}s)e^{L_x s},\Delta  \big\} \\- \min \big\{(\beta + \vecnorm{M}\hat{v}t_k^m)e^{L_x t_k^m},\Delta \big\},
	\end{multline}
	where,
	\begin{equation}
	\beta :=  \vecnorm{ \sum_{T \in  \mathcal{T}_{k+1}^m}\hat{\xi}}_M	+ \vecnorm{M}2\hat{\theta}
	\end{equation}
	From Assumption~\ref{ass:Nonempty} on the non-emptiness of $\mathbb{X}_s$, $g(\cdot)$ is always well defined.
	We now consider the approximate state trajectory $\tilde{x}(\cdot)$ at times $t_k^m$ and $s\geq t_k^m$. 
	Then,
	\begin{align}
	\norm{d_{\mathbb{X}_{s}}(\tilde{x}(s)) - d_{\mathbb{X}_{t_k^m}}(\tilde{x}(t_k^m))} &\leq \vecnorm{\tilde{x}(s)-\tilde{x}(t_k^m)} + g(s,t_k^m) \\ 
	&= \vecnorm{\int_{t_k^m}^s\dot{\tilde{x}}(t)\mathrm{d}t} + g(s,t_k^m) \\
	& \leq \int_{t_k^m}^{s} \vecnorm{\dot{\tilde{x}}(t)} \mathrm{d}t + g(s,t_k^m).
	\end{align}
	Application of the triangle inequality to the mesh refinement conditions \eqref{eq:RelLocErrBound} yields 
	\begin{equation}
	\begin{split}
	\int_{T_k} \vecnorm{\dot{\tilde{x}}(t)}\mathrm{d}t - 
	\int_{T_k} \vecnorm{f(t,\tilde{x}(t),u(t))}\mathrm{d}t \leq \\ 
	\int_{T_k} \vecnorm{\dot{\tilde{x}}(t)-f(t,\tilde{x}(t),u(t))}\mathrm{d}t \leq \vecnorm{\hat{\xi}},
	\end{split}
	\end{equation}
	from which
	\begin{equation}
	\begin{split}
	\int_{t_k^m}^s \vecnorm{\dot{\tilde{x}}(t)}\mathrm{d}t &\leq \int_{T_k} \vecnorm{\dot{\tilde{x}}(t)}\mathrm{d}t \leq \vecnorm{\hat{\xi}} + \int_{T_k} \vecnorm{f(t,\tilde{x}(t),u(t))}\mathrm{d}t \\ &\leq \vecnorm{\hat{\xi}} + L_f\norm{T_k}.
	\end{split}
	\end{equation}
	Therefore we may bound the distance as
	\begin{equation}
	d_{\mathbb{X}_{s}}(\tilde{x}(s)) - d_{\mathbb{X}_{t_k^m}}(\tilde{x}(t_k^m)) \leq  \vecnorm{\hat{\xi}}+ L_f\norm{T_k} + g(s,t_k^m).
	\end{equation}
	We now need to prove that there is no $s < t_{k+1}^m$ for which $d_{\mathbb{X}_{s}}(\tilde{x}(s)) = 0$. Following the argument in \cite{fontes2018guaranteed}, assume $\exists s_1 < t_{k+1}^m$ such that 
	\begin{equation}
	d_{\mathbb{X}_{s_1}}(\tilde{x}(s_1)) = 0 \text{ and } d_{\mathbb{X}_{s}}(\tilde{x}(s)) > 0, \forall s \in (s_1,t_{k+1}^m].
	\end{equation}
	Therefore, 
	\begin{align}
	d_{\mathbb{X}_{s_1}}(\tilde{x}(s_1)) &\leq d_{\mathbb{X}_{t_k^m}}(\tilde{x}(t_k^m)) +  \vecnorm{\hat{\xi}} + L_f(s_1-t_k^m) + g(s_1,t_k^m) \\ 
	&\leq -\alpha_k +  \vecnorm{\hat{\xi}} + L_f(s_1-t_k^m) + g(s_1,t_k^m)) \\
	& < -\alpha_k + \vecnorm{\hat{\xi}} + L_f\norm{T_k} + g(t_k^m,t_{k+1}^m) \leq 0.
	\end{align}
	This contradicts our assumption. Therefore we may conclude that
	\begin{equation}
	d_{\mathbb{X}_s}(\tilde{x}(s)) \leq 0, \forall s \in [t_k^m, t_{k+1}^m], \forall k \in \mathcal{K}_h.
	\end{equation}
	Combining this with the results of Lemma~\ref{lemma:ConstraintTightening} guarantees
	\begin{equation}
	d_\mathbb{X}(x(t)) \leq 0, \forall t \in [t_i^u,t_{i+1}^u].
	\end{equation}
\end{proof}

Theorem~\ref{th:tight} shows that the constraints enforced at each mesh point $t_k^m$ should be tightened by an amount that is affine in the length of the succeeding mesh segment $\norm{T_k}$ and dependent on the metric of allowable ODE errror $\hat{\xi}$.
This guarantees that the approximate trajectory is admissible for all time. The constraints must be further tightened by an amount $g(\cdot)$ that relates the difference between approximate and actual trajectories.

We could use Theorem~\ref{th:tight} to propose an OCP solution scheme where the constraints are successively loosened at each iteration of the refinement. As the mesh becomes increasingly refined, the constraints will become more and more relaxed. In this way constraint satisfaction of the true state $x(t)$ would be guaranteed at every iteration of the refinement procedure. However, we find in practice that this procedure is particularly conservative, especially for very coarse meshes. This may affect the feasibility of the NLP solved in the initial iterations of the refinement procedure. We could instead start with a finer mesh, but this conversely has adverse effects on solution time. 

Instead we consider a two-part procedure detailed in Algorithm~\ref{alg:MR}. From the condition in Theorem~\ref{th:tight},
\begin{equation}
     \norm{d_{\mathbb{X}_{s}}(\tilde{x}(s)) - d_{\mathbb{X}_{t_k^m}}(\tilde{x}(t_k^m))} \leq \vecnorm{\tilde{x}(s)-\tilde{x}(t_k^m)} + g(s,t_k^m),
\end{equation}
we can guarantee that $d_{\mathbb{X}_{t_{k+1}^m}}(\tilde{x}(t_{k+1}^m)) \leq 0$ if
\begin{equation} \label{eq:tight_iter}
    d_{\mathbb{X}_{t_{k}^m}}(\tilde{x}(t_{k}^m)) \leq \underbrace{\max_{s \in T_k} \vecnorm{\tilde{x}(s) - \tilde{x}(t_k^m)} + g(t_{k+1}^m,t_k^m)}_{\Upsilon(\tilde{x},T_k)}.
\end{equation}
We can similarly guarantee that $d_{\mathbb{X}_{t_{k}^m}}(\tilde{x}(t_{k}^m)) \leq 0$
if 
\begin{equation} \label{eq:tight_iter2}
d_{\mathbb{X}_{t_{k+1}^m}}(\tilde{x}(t_{k+1}^m)) \leq \Upsilon(\tilde{x},T_k).
\end{equation}
In practice, the difference is that \eqref{eq:tight_iter} tightens constraints forwards in time, while \eqref{eq:tight_iter2} passes through the constraints backwards in time.

The left and right hand sides of \eqref{eq:tight_iter} and \eqref{eq:tight_iter2} can be calculated from the approximate prediction $\tilde{x}(t)$. In Algorithm~\ref{alg:MR} we propose first iterating through a typical mesh-refinement procedure (lines \ref{line:MR1}--\ref{line:MR2} in Algorithm~\ref{alg:MR}) to generate a prediction $\tilde{x}$ satisfying ODE error certificates in each segment.
This sufficiently accurate prediction is then used as a proxy to evaluate \eqref{eq:tight_iter} on the refined mesh (lines \ref{line:t1}--\ref{line:t2} in Algorithm~\ref{alg:MR}).
If \eqref{eq:tight_iter} is not satisfied at any time $t \in \mathcal{T}_k$ then the constraint set is tightened by
\begin{equation}
    \mathcal{B}_{\Upsilon_k} := \left\{x \in \mathbb{R}^n : \vecnorm{x} \leq \Upsilon(\tilde{x},T_k) \right\},    
\end{equation}
as in line~\ref{line:BallTight}, and the resulting NLP is solved. We may iterate our algorithm based on either the forwards inequality \eqref{eq:tight_iter} or backwards inequality \eqref{eq:tight_iter2}. 
This procedure is repeated until \eqref{eq:tight_iter} is satisfied $\forall t \in T_h$.
Intuitively, we are robustifying the problem formulation to inaccuracies in our solution method.
This procedure is exemplified in Section~\ref{sec:NonlinearExample}.

At this point we would like to briefly discuss input constraints.These are often simple box constraints of the form 
\begin{equation} \label{eq:input_box_constraints}
u(t) \in \mathcal{U} := \{u \in \mathbb{R}^m : \underline{u} \leq u \leq \overline{u}\},
\end{equation}
which may, for example, arise due to physical actuator limitations.
On the one hand, in our analysis we have assumed that the approximate solution $u(t)$ is implemented exactly at the plant. So for the input we do not need to perform any equivalent of the constraint tightening detailed in Lemma~\ref{lemma:ConstraintTightening}, which bounds the distance between the approximate and actual state trajectories.	

Instead we could seek to just guarantee that $u(t)$ satisfies \eqref{eq:input_box_constraints} between the mesh points --- which is similar in aim to the second part of the constraint tightening procedure introduced in Theorem~\ref{th:tight}. However, additional complications arise because Problem \eqref{eq:Bolza_cont} places no continuity constraints on the input. How these potential discontinuities are handled by the transcription and solution method determines how the constraints must be tightened in order to guarantee continuous time satisfaction of \eqref{eq:input_box_constraints} $\forall t \in T_{op}$.

For example, it may be the case that our optimal control software only allows for input discontinuities at (a particular subset of) mesh points. This is one approach taken that can be taken in ICOCS2\cite{nie2018iclocs2} with the `multi-phase' problem formulation. If we therefore know that $u(t)$ is going to be piecewise polynomial between mesh points then we may use a similar approach as in Theorem~\ref{th:CollTrig}, where a Lipschitz constant of the input representation polynomial --- typically an order below the polynomial used for state representation --- can be used to bound the maximum change in input between the mesh points. We have found this approach to be overly conservative in the tightening unless an extremely dense mesh is used.
Additionally if a low order $p$-method (trapezoidal or Euler) is used then input constraint tightening is unnecessary. Hence we do not tighten any input constraints in the example in Section~\ref{sec:NonlinearExample}

\begin{algorithm}[tb]
    \caption{Adaptive Constraint Mesh Refinement Scheme}
    \label{alg:MR}
    \begin{algorithmic}[1] 
        \Procedure{Mesh Refinement}{\ref{eq:Bolza_cont}, $\hat{\epsilon}$, $\mathcal{T}_h^0$, $J$}
            \State $\epsilon_{k,i }\gets \infty, \ \forall k \in \mathcal{K}, i \in \{1,\hdots,n\}$ 
            \State $j \gets 0$
            \State $\mathbb{X}_s \gets \mathbb{X}\ominus\mathcal{B}_s$
            \While{$\left( \exists \epsilon_{k,i} > \hat{\epsilon}_i\right)$} \label{line:MR1}
                \State $($\ref{eq:Bolza_cont}$_\text{NLP}^j)\gets$ Transcribe$(($\ref{eq:Bolza_cont}$_{\mathbb{X}_s}),\mathcal{T}_h^j)$ 
                \State $\tilde{x}(t),u(t) \gets \text{argmin }($\ref{eq:Bolza_cont}$_\text{NLP}^j)$
                \State Calculate $\epsilon_{k,i}, \ \forall k \in \mathcal{K}, i \in \{1,\hdots,n\}$
                \State $\mathcal{T}_h^{j+1} \gets \text{MeshRefinement} (\mathcal{T}_h^j,\epsilon)$
                \State $j \gets j+1$
            \EndWhile \label{line:MR2}
            \While{$\left(\exists t \in T_h : d_{\mathbb{X}_{t_{k}^m}}(\tilde{x}(t_{k}^m))\geq \Upsilon(\tilde{x}(\cdot),T_k)\right)$ } \label{line:t1}
                \State $\mathbb{X}_{t_k^m} \gets \mathbb{X}\ominus(\mathcal{B}_{t_k^m}\oplus\mathcal{B}_{\Upsilon_k})$ \label{line:BallTight}
                \State $($\ref{eq:Bolza_cont}$_\text{NLP}^j)\gets$ Transcribe$(($\ref{eq:Bolza_cont}$_{\mathbb{X}_{t_k^m}}),\mathcal{T}_h^j)$ 
                \State $\tilde{x}(t),u(t) \gets \text{argmin }($\ref{eq:Bolza_cont}$_\text{NLP}^j)$
                \State $j \gets j+1$
            \EndWhile \label{line:t2}
            \State \textbf{return} $\tilde{x}(t),u(t),\epsilon$
        \EndProcedure
    \end{algorithmic}
\end{algorithm}

\section{Example: Linear autonomous system} \label{sec:LinearExample}
Consider the LTI autonomous system described by
\begin{equation} \label{eq:LinSys}
\dot{x}(t) = Ax(t) + E\hat{w},
\end{equation}
\begin{equation} \label{eq:LinSysPar}
x(t_0) = \begin{bmatrix} 10 \\ 10 \end{bmatrix}, \
A = \begin{bmatrix}
0.05 & 0.5 \\
0 & -0.5
\end{bmatrix}, \
E = \begin{bmatrix}
1 \\
1
\end{bmatrix}, \ \hat{w}=0.01,
\end{equation}
simulated over $6$,s.
We discretize $x(t)$ on a uniform coarse mesh with $h=1.5$\,s using either Forward Euler (FE),
or a third-order Hermite-Simpson (HS) scheme.
The FE discretization corresponds to a piece-wise linear trajectory parametization,
while the HS parametization is piece-wise cubic, and detailed in \cite{kelly2017introduction}.
Fig.~\ref{fig:LinState} shows both approximations alongside the `true' and `nominal' trajectories $x(t),x^*(t)$ generated from Matlab's \texttt{linsim} function.
\begin{figure}[tb]
	\centering
	\includegraphics[width=\columnwidth]{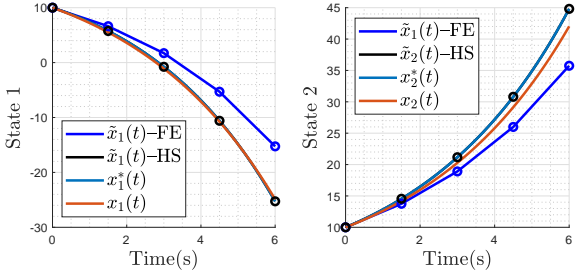}
	\caption[TODO]{A comparison of the true $x(t)$, the nominal $x^*(t)$ and the approximate $\tilde{x}(t) \in \mathcal{X}_{1}$ resulting from FE and HS collocation on a mesh with $h=1.5$, $\sigma=1$.}
	\label{fig:LinState}
\end{figure}
\begin{figure}[t!]
	\centering
	\vspace{0.5em}
	\subfloat[Approximation errors and quadratures. ]{
		\label{subfig:error1}
		\includegraphics[width=\columnwidth]{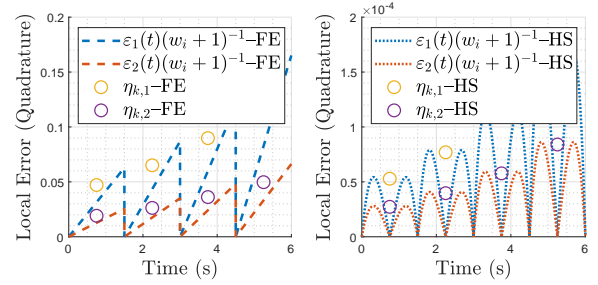}}

	\subfloat[Prediction errors of collocation schemes. ]{
		\label{subfig:error2}
		\includegraphics[width=\columnwidth]{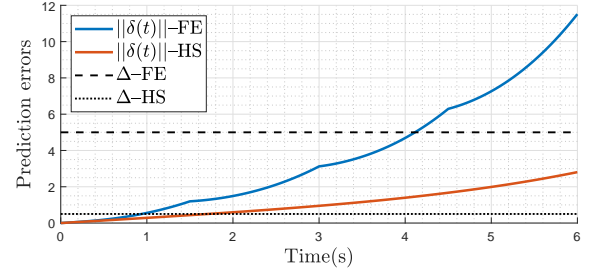} }
	\caption{Plots of the approximation and prediction errors associated to the trajectories shown in Fig.~\ref{fig:LinState}.}
	\label{fig:LinErr}
\end{figure}

As seen in Fig.~\ref{subfig:error1}, the approximation error $\epsilon_i(t)$ and quadratures $\eta_{k,i}$ for the HS approximation are multiple orders of magnitude lower than for FE on the same mesh. For each scheme we a posteriori choose $\hat{\eta}= \max_k \eta_k$, which equals $3.8420$ in the case of FE and $0.0081$ for HS.
Since system \eqref{eq:LinSys} is unstable, the states in Fig.~\ref{fig:LinState} diverge and corresponding approximation errors grow.

Corresponding to the different approximation errors, we chose an ETC threshold of $\hat{\epsilon}=5$ for the FE and $\hat{\epsilon}=0.5$ for the HS approximation. In both cases $\hat{\epsilon}>\hat{\eta}$ and the ETC condition \eqref{eq:EventTriggeringCondition} is met after $4.12$\,s and $1.7$\,s, respectively. Fig.~\ref{subfig:error2} shows that the FE approximation error results in significant growth in $\epsilon(t)$. The minimum IUT $\underline{\tau}_u$ and times $\tau^\text{QET}, \tau^\text{CT}$ are detailed in Table~\ref{tab:1} for both approximations, and the associated functions plotted in Fig.~\ref{fig:LinTrig}.

\begin{table}[tb]
	\centering
	\caption{Comparison of estimated IUTs.}
	\begin{tabular}{ l|c|c|c|c|c }
		& $\hat{\eta}$ & $\underline{\tau}^u$ & $\tau^\text{QET}(\eta)$ & $\tau^\text{QET}(\varepsilon)$ & $\tau^\text{CT}$ \\
		\hline\hline
		FE $(\hat\epsilon=5)$ & $0.1$ & $0.25$ & $1.49$ & $1.74$ & $1.39$ \\
		HS $(\hat\epsilon=0.5)$ & $ 0.045$ & $1.22$  & $ 1.27$ & $1.28$ & $0.4$ \\
	\end{tabular} \label{tab:1}
\end{table}

\begin{figure}[tb]
	\centering
	\includegraphics[width=\columnwidth]{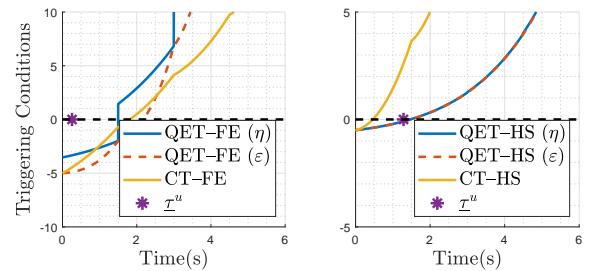}
	\caption[TODO]{A comparison of the online triggering estimations proposed in Section~\ref{sec:OnlineTrig} for FE and HS Schemes.}
	\label{fig:LinTrig}
\end{figure}

The line segments of $\text{QET-FE}(\eta)$ and $\text{QET-HS}(\eta)$ in Fig.~\ref{fig:LinTrig} have slope $L_x \hat{\epsilon} + \hat{w}$, and reflect state uncertainty due to $w(t)$. The discontinuities arise from the approximation errors. For the FE scheme, $\underline{\tau}^u$ is significantly more conservative than $\tau^\text{QET}(\eta)$, whereas $\underline{\tau}^u \approx \tau^\text{QET}(\eta)$ for the HS. This is because $\underline{\tau}^u$ is estimated from the worst case absolute local error. From Fig.~\ref{fig:LinErr} the absolute difference between the errors in the first and last segment is orders of magnitude more than the difference between the errors in the first and last segment of the HS scheme.
Using relative local error, as suggested in \cite[Ch. 5]{betts2010practical}, may result in better agreement between $\underline{\tau}$ and $\tau^\text{QET}(\eta)$ for the FE scheme.

\section{Example: Closed-loop Simulation} \label{sec:NonlinearExample}

We now consider a closed-loop example of the two link robotic arm, shown in Figure~\ref{fig:TwoLinkRobotArm}, over a simulation time of~$8$\,s. 
\begin{figure}[tb]
	\centering
	\includegraphics[width=\columnwidth]{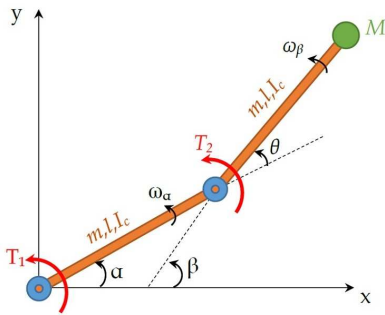}
	\caption[TODO]{The two link robot arm plant considered in closed-loop simulations}
	\label{fig:TwoLinkRobotArm}
\end{figure}
We employ the quadratic objective function
\begin{equation}
\int_{t_0}^{t_f} u_1^2(t) + u_2^2(t) \mathrm{d}t,
\end{equation}
over a receding horizon of length $T_\text{OP}=4$\,s. The input $u(t)\in\mathbb{R}^2$, and the state is  $x:=[\omega_\alpha, \   \omega_\beta, \ \theta, \ \beta]^\intercal$. $\omega_\alpha$ and $\omega_\beta$ are the angular accelerations of each joint in the arm and $\theta, \ \beta$ the corresponding angular velocities. The nonlinear time-invariant dynamics $f(x,u)$ are described by
\begin{align}
    \dot\omega_\alpha(t) &= \Big(\frac{9}{4}\sin(\theta(t))\cos(\theta(t))x_1(t)^2 + 2x_2(t)^2 +
    \frac{4}{3}u_1(t) - \\ & \quad \quad \frac{4}{3}u_2(t) - \frac{3}{2}\cos(\theta(t))u_2(t)\Big)
    \Big(\frac{31}{36} + \frac{9}{4}\sin^2(\theta(t))\Big)^{-1}, \\
    \dot\omega_\alpha(t) &= \Big(\frac{9}{4}\sin(\theta(t))\cos(\theta(t))x_2(t)^2 + \frac{7}{2}x_1(t)^2 - \frac{7}{3}u_2(t) + \\ & \frac{3}{2}\cos(\theta(t))\left(u_1(t)-u_2(t)\right)\Big) \Big(\frac{31}{36} + \frac{9}{4}\sin^2(\theta(t))\Big)^{-1}, \\
    \dot\theta(t) &= \omega_\alpha(t) - \omega_\beta(t), \\
    \dot\beta(t) &= \omega_\beta(t).
    \label{eq:floatingeq}
\end{align}
Each state is additionally effected with additive noise $v(t) \in \mathbb{R}^4, \vecnorm{v(t)}\leq 5\times 10^3$.
Using the method described in \cite{khalil2002nonlinear}, we estimate the Lipschitz constant of the system as $L_x=4.5309$.
We employ a trapezoidal collocation, detailed in \cite{kelly2017introduction}, where linear input and quadratic state paremterisation is assumed. 

We base the mesh refinement steps in Algorithm~\ref{alg:MR} on the algorithm in \cite[Ch.5]{betts2010practical}, with refinement based on metric \eqref{eq:AbsoluteQuadrature}. This initial refinement phase also terminates when \eqref{eq:AbsoluteQuadrature} is less than $\hat{\epsilon}_i=10^{-3}$ in each mesh segment, after which we perform constraint tightening.
We constrain the angular acceleration of the first joint to be $\omega_\alpha\in[-0.3,0.3]$. A viable distance operator may be used with the presence of additional constraints through the method proposed in~\cite{fontes2018guaranteed}.

We consider two different closed-loop update scenarios. In scenario a) the ETC scheme is used to trigger control updates based on the one-norm of the difference between the predicted state $\tilde{x}(t)$ and measured state $x(t)$ using condition \eqref{eq:EventTriggeringCondition} with threshold $\Delta=7\times10^{-3}$. 
In scheme b) we pose a self-triggered control setup, where updates are performed at times in accordance with Theorem~\ref{th:QET}, also with threshold $\Delta=7\times10^{-3}$. 
The closed-loop input and state trajectories are shown in Figures~\ref{fig:NL_Input} and \ref{fig:NL_State}. The subscripts $E$ and $S$ are respectively used to denote the closed-loop trajectory for the event- and self-triggered update schemes. 

\begin{figure}[tb]
	\centering
	\includegraphics[width=\columnwidth]{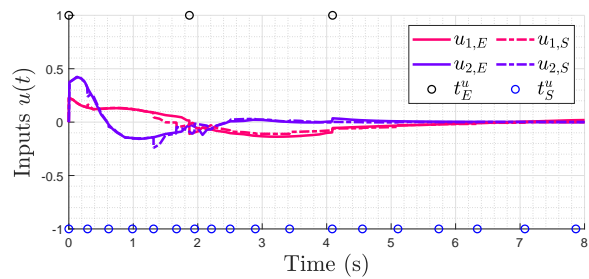}
	\caption[TODO]{Closed-loop input trajectories and control update times for the ETC scheme and STC scheme based on Theorem~\ref{th:QET}.}
	\label{fig:NL_Input}
\end{figure}

\begin{figure}[tb]
	\centering
	\includegraphics[width=\columnwidth]{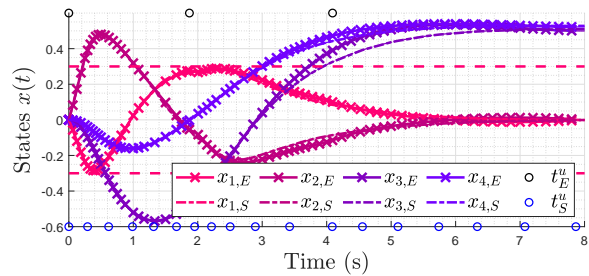}
	\caption[TODO]{Closed-loop state trajectories and control update times for the ETC scheme and STC scheme based on Theorem~\ref{th:QET}. The crosses depict the combined NLP solution of the ETC solves.}
	\label{fig:NL_State}
\end{figure}

The crosses in Figure~\ref{fig:NL_State} are used to show the location of the mesh points of the concatenated closed-loop solution. The update times $\mathcal{T}_u$ for the ETC and STC scenarios are shown by the black and blue open circles. The IUTs of the ETC scheme are notably less significant than the STC IUTs, with updates on average every $2.69$\,s. Conversely, condition \eqref{eq:trig_QET} results in updates on average every $0.48$\,s. 
The closed-loop cost of the ETC scheme is $0.1335$, while the increased triggering of the STC scheme costs $0.1243$.
However, it is important to note that \eqref{eq:trig_QET} does not use any knowledge of the predicted or measured state, but only on the error in the solution of the OCP guaranteed by the mesh refinement algorithm. Even if the  triggering is based just on condition \eqref{eq:trig_QET}, the IUTs increase as the state approaches the terminal condition. Future work should consider a condition that depends on both the predicted state $\tilde{x}$ \emph{and} the quadrature errors $\epsilon_k$.

The proposed triggering does account for proximity to constraints. Using Theorem~\ref{th:tight} we can guarantee constraint satisfaction of the continuous-time plant in both the event- and self-triggered schemes. At each update we apply the two-stage tightening procedure in Algorithm~\ref{alg:MR}. The result of the first solve at time $t=0$ is shown in Figure~\ref{fig:NL_ConstTight} for the constrained state $\omega_\alpha(t)$. The dashed red line shows the first stage of the procedure, where constraint tightening is based on $\mathcal{B}_s$. The black line shows the predicted state trajectory using only this first stage of tightening. As shown, the quadratic sate interpolation results in the predicted state violating the tightened constraint. In this case we cannot guarantee constraint satisfaction of the real plant at these times. 
The second stage of tightening, at discrete times $\mathcal{T}_m$, is shown by the blue dots. For clarity of representation, we mimic the tightening of the upper and lower bounds. Nominally, for a sufficiently dense mesh the trajectory of the tightened problem and non-tightened problem are similar. 

\begin{figure}[tb]
	\centering
	\includegraphics[width=\columnwidth]{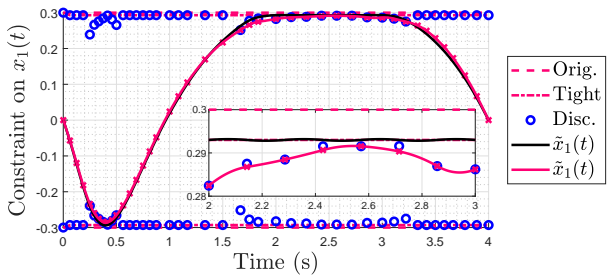}
	\caption[TODO]{An example of the constraint tightening mesh-refinement procedure proposed in Section~\ref{sec:constraintTightening} and Algorithm~\ref{alg:MR}. The black line shows the state refined state trajectory after the constraint set is tightened from $\mathbb{X}$ to $\mathbb{X}_s$. The pink link shows the final trajectory after the constraints are additionally iteratively tightened. The blue dots show the final constraint bounds at the mesh points.}
	\label{fig:NL_ConstTight}
\end{figure}

Finally, in Figure~\ref{fig:CompComparison} we consider how different combinations of thresholds $(\Delta,\hat\epsilon)$ result in both a different  load and distribution in time of computational load for the ETC scheme. As the threshold $\Delta$ increases, the number of control updates and the time spent in IPOPT decreases. We also see that as the quality of the prediction improves, the number of updates decreases (up to a point), but the total time in IPOPT tends to increase.

\begin{figure}[tb]
	\centering
	\includegraphics[width=\columnwidth]{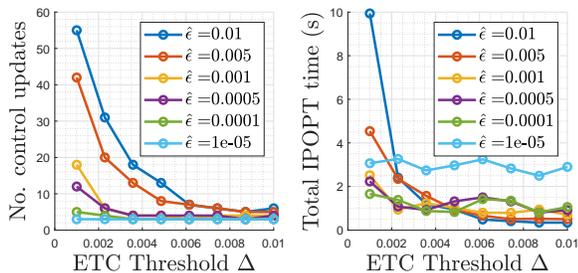}
	\caption[TODO]{Comparison of computation time and number of control updates for various ETC and refinement conditions $(\hat{\epsilon},\Delta)$}
	\label{fig:CompComparison}
\end{figure}

\section{Conclusions}
Event- and self-triggering schemes attempt to reduce performance loss due to the growth in uncertainty as the system evolves in open-loop. Typically, uncertainty arising in the physical world may  at best be unmodelled and at worst unstructured. However, uncertainty arising due to the solution of \eqref{eq:Bolza_cont} is highly structured, therefore it should not be treated equivalently to, say, $v(t)$.

This is the first work that has considered using the accuracy of the solution of an optimal control problem as a metric for triggering times. This gives us the potential to use cheaper, less accurate, solutions without potentially sacrificing the guarantees of a triggered model predictive control scheme.
We have considered the use of approximate system predictions arising through numerical direct collocation solutions of optimal control problems in event-triggered control schemes. A commonly used metric for refining the state approximations has allowed us to guarantee a minimum inter-update time $\underline{\tau}^u$ of the event-triggered control. If additionally the direct collocation software returns the approximation error metric as part of the solution, we may achieve a better estimate of the inter-update interval. Although the derived bounds may be conservative, they give us the ability to trade-off the computation burden of a single control update with the frequency of control updates.

This theme is also of importance in simulation of event- or self-triggered control which require potentially unstable numerical simulations to be run in open-loop for extended time steps, without the stabilizing effect of periodic state feedback.
Further work may consider the use of path constraint violations as a metric for mesh refinement and/or triggering.


\bibliography{MR_ETC_bibliography}
\bibliographystyle{ieeetr}

\begin{IEEEbiography}[{\includegraphics[width=1in,height=1.25in,clip,keepaspectratio]{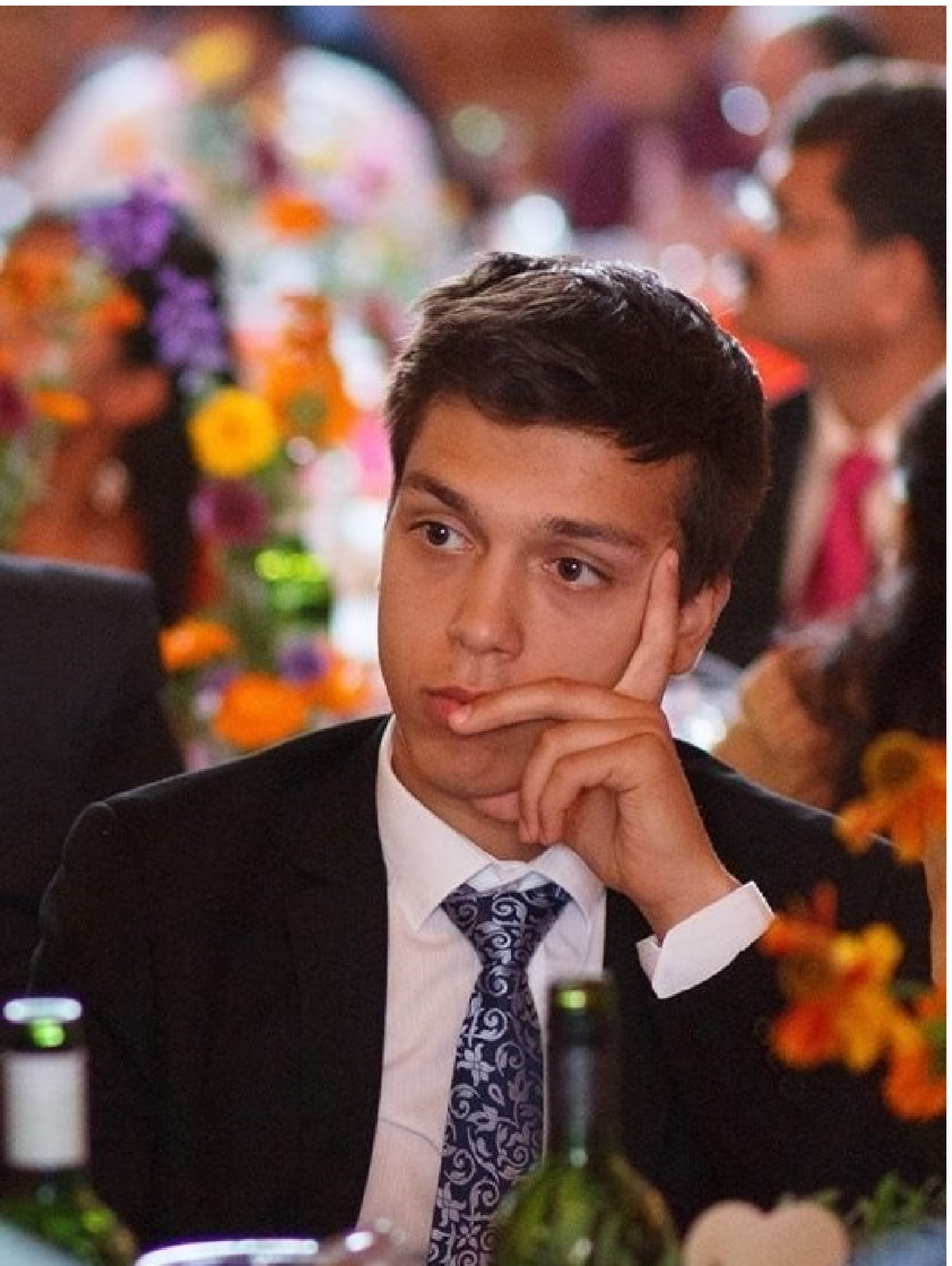}}]{Omar Faqir}
received an MEng from Imperial College London (ICL), London, UK, in 2016. 
He is currently pursuing a PhD in the Electrical and Electronic Engineering Department of ICL, with the Control and Power Research group and the Information Processing and Communications Lab.
His current research interests include optimisation based control theory, wireless communications and autonomous vehicles.
\end{IEEEbiography}

\begin{IEEEbiography}[{\includegraphics[width=1in,height=1.25in,clip,keepaspectratio]{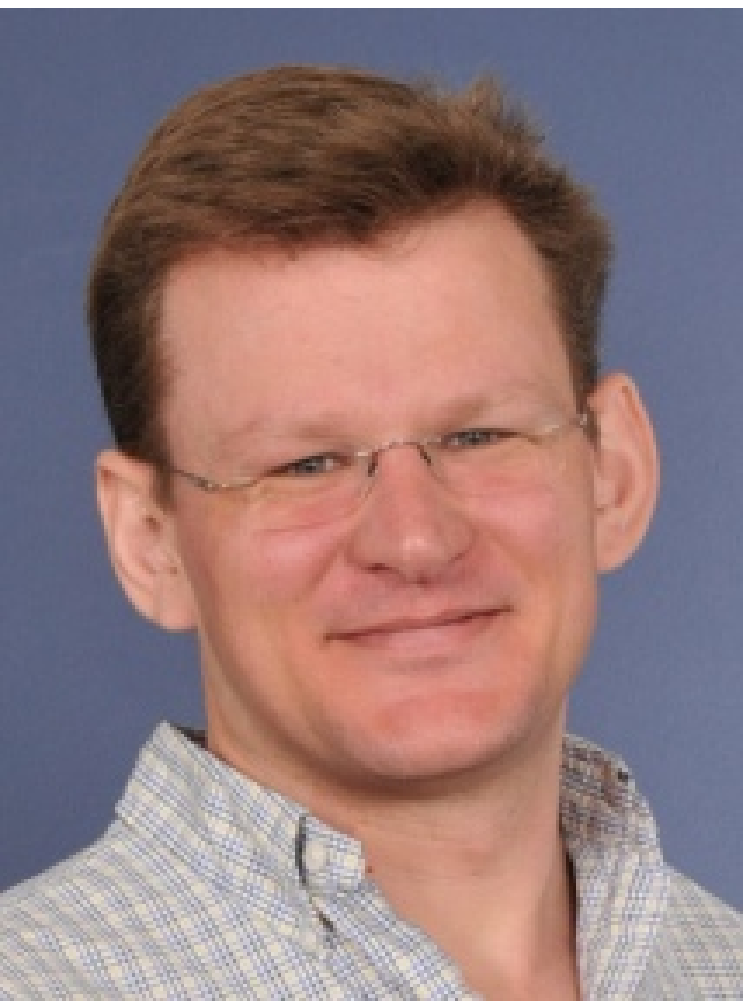}}]{Eric Kerrigan}
(S'94-M'02-SM’16) received a BSc(Eng) from the University of Cape Town and a PhD from the University of Cambridge. His research is in the design of efficient numerical methods and computing architectures for solving  optimal control problems in real-time, with applications in the design of aerospace, renewable energy and information systems. He is  the chair of the IFAC Technical Committee on Optimal Control, a Senior Editor of IEEE Transactions on Control Systems Technology and  an Associate Editor of IEEE Transactions on Automatic Control and the European Journal of Control.
\end{IEEEbiography}

\end{document}